% LaTeX Article Template
% \documentclass[prodmode,acmec]{ec-acmsmall}
%\documentclass[prodmode]{acmsmall-ec14}

%\clubpenalty=10000 
%\widowpenalty = 10000

%\acmVolume{X}
%\acmNumber{X}
%\acmArticle{X}
%\acmYear{2014}
%\acmMonth{2}
%\usepackage[numbers]{natbib}
%\bibliographystyle{acmsmall}

\documentclass[11pt]{article}
\usepackage{geometry}
\geometry{verbose,letterpaper,tmargin=1in,bmargin=1in,lmargin=1in,rmargin=1in}
\usepackage{graphicx}
\usepackage{amsmath}
\usepackage{amssymb}
\usepackage{xspace}
\usepackage[tight]{subfigure}
\usepackage[numbers]{natbib}

\newtheorem{theorem}{Theorem}[section]

\newtheorem{lemma}[theorem]{Lemma}
\newtheorem{proposition}[theorem]{Proposition}

\def\squarebox#1{\hbox to #1{\hfill\vbox to #1{\vfill}}}
\newcommand{\qed}{\hspace*{\fill}\vbox{\hrule\hbox{\vrule\squarebox{.667em}\vrule}\hrule}\smallskip}
\newenvironment{proof}{\noindent{\bf Proof:~~}}{\(\qed\)}

\newcommand{\xhdr}[1]{\smallskip {\bf #1.}}

\def\eps{{\varepsilon}}

\newcommand{\omt}[1]{}
\newcommand{\Xomit}[1]{}
\newcommand{\supproof}[1]{}
\newcommand{\supproofeq}[1]{}

%\doi{2600057.2602890 }

\begin{document}
% \title{Time-Inconsistent Planning: A Graph-Theoretic Approach}
% \title{Time-Inconsistent Optimization: A Class of Computational Problems in Behavioral Economics}
% \title{Time-Inconsistent Planning: A Class of Computational Problems in Behavioral Economics}
\title{Time-Inconsistent Planning: A Computational Problem in Behavioral Economics}
  
  \author{  
  Jon Kleinberg
  \thanks{
  Cornell University, Ithaca NY.
  Email: kleinber@cs.cornell.edu.
  Supported in part by
	a Simons Investigator Award,
	a Google Research Grant,
	an ARO MURI grant,
	and NSF grants
	IIS-0910664, %HCC
	CCF-0910940, %AF
	and IIS-1016099. 
  }
   \and 
  Sigal Oren
  \thanks{
  Hebrew University and Microsoft Research, Israel.
  Email: sigalo@cs.huji.ac.il. 
  Supported in part by an I-CORE fellowship and a Microsoft Research Fellowship.
  }
  }
  
%  \author{
%  JON KLEINBERG   \affil{Cornell University, Ithaca NY}
%  SIGAL OREN      \affil{Hebrew University and Microsoft Research, Israel}
%  }
  
%  \begin{bottomstuff}
%This work has been supported in part by
%  an I-CORE fellowship, 
%  a Microsoft Research Fellowship,
%a Simons Investigator Award,
%a Google Research Grant,
%an ARO MURI grant,
%and NSF grants
%IIS-0910664, %HCC
%CCF-0910940, %AF
%and IIS-1016099. %Jure 
%  Authors' emails:
%  {\tt kleinber@cs.cornell.edu},
%  {\tt sigalo@cs.huji.ac.il}.
%  \end{bottomstuff}
  
  \begin{titlepage}
  \maketitle

\begin{abstract}

In many settings, people exhibit behavior that is inconsistent across time ---
we allocate a block of time to get work done and then procrastinate,
or put effort into a project and then later fail to complete it.
An active line of research in behavioral economics and related fields
has developed and analyzed models for this type of time-inconsistent behavior.

Here we propose a graph-theoretic model of tasks and goals, in which
dependencies among actions are represented by a directed graph, and
a time-inconsistent agent constructs a path through this graph.
We first show how instances of this path-finding problem on different input 
graphs can reconstruct a wide range of qualitative phenomena observed
in the literature on time-inconsistency, including procrastination,
abandonment of long-range tasks, and the benefits of reduced sets
of choices.
We then explore a set of analyses that quantify over the set of all
graphs; among other results, we find that in any graph, there
can be only polynomially many distinct forms of time-inconsistent behavior;
and any graph in which a
time-inconsistent agent incurs significantly more cost than an 
optimal agent must contain a large ``procrastination'' structure
as a minor.
Finally, we use this graph-theoretic model to explore ways in which
tasks can be designed to help motivate agents to reach designated goals.

\end{abstract}

%\category{J.4}{Social and Behavioral Sciences}{Economics}

%\keywords{Behavioral Economics; Time-Inconsistency; Present Bias}

\end{titlepage}

%Jon Kleinberg \titlenote{ Supported in part by 
%the John D. and Catherine T. MacArthur Foundation,
%a Google Research Grant,
%a Yahoo!~Research Alliance Grant,
%and NSF grants 
%IIS-0705774, % wya
%IIS-0910664, %HCC
%and CCF-0910940. %AF
%}\\
%       \affaddr{Department of Computer Science}\\
%       \affaddr{Cornell University, Ithaca NY 14853}\\
%       \email{kleinber@cs.cornell.edu}
%% 2nd. author
%\alignauthor
%Sigal Oren \titlenote{Supported in part by 
%NSF grant CCF-0910940 and a Microsoft Research Fellowship.}\\
%       \affaddr{Department of Computer Science}\\
%       \affaddr{Cornell University, Ithaca NY 14853}\\
%       \email{sigal@cs.cornell.edu}
%       }

\section{Introduction}

A fundamental issue in behavioral economics --- and in the modeling
of individual decision-making more generally --- is to understand
the effects of decisions that are inconsistent over time.
Examples of such inconsistency are widespread in everyday life:
we make plans for completing a task but then procrastinate;
we put work into getting a project partially done but then abandon it;
we pay for gym memberships but then fail to make use of them.
In addition to analyzing and modeling these effects, there has
been increasing interest in 
incorporating them into the design of policies and incentive systems
in domains that range from health to personal finance.

These types of situations have a recurring structure: a person makes
a plan at a given point in time for something they will do in the future
(finishing homework, exercising, paying off a loan), 
but at a later point in time they fail to follow through on the plan.
Sometimes this failure is the result of unforeseen circumstances that
render the plan invalid --- a person might join a gym but then break
their leg and be unable to exercise --- but in many cases the plan
is abandoned even though the circumstances are essentially the same
as they were at the moment the plan was made.
This presents a challenge to any model of planning based on 
optimizing a utility function that is consistent over time: 
in an optimization framework, the plan must have been an optimal
choice at the outset, but later it was optimal to abandon it.
A line of work in the economics literature has thus investigated the properties
of planning with objective functions that vary over time in certain
natural and structured ways.

\xhdr{A Basic Example and Model}
To introduce these models, it is useful to briefly describe an example
due to George Akerlof \cite{akerlof-procrastination}, with the
technical details adapted slightly for the discussion here.
(The story will be familiar to readers who know Akerlof's paper;
% Readers familiar with Akerlof's paper may skip this discussion;
we cover it in some detail because it will motivate a useful
and recurring construction later in the work.)
Imagine a decision-making agent --- Akerlof himself, in his story ---
who needs to ship a package sometime during one of the next $n$ days,
labeled $t = 1, 2, \ldots, n$, and must decide on which day $t$ to do so. 
Each day that the package has not reached its destination results in a
cost of $x$ (per day), due to the lack of use of the package's contents.
If we suppose that the package takes a constant $h$ days in transit, this means
a cost of $(t + h) x$ if it is shipped on day $t$.
Also, shipping the package is an elaborate operation that will result in 
one-time cost of $c$, due to the amount of time involved in getting it sent 
out.  
The package must be shipped during one of the $n$ specified days.

What is the optimal plan for shipping the package?  Clearly the cost $c$
will be incurred exactly once regardless of the day on which it is shipped,
and there will also be a cost of $(t + h) x$ if it is shipped on day $t$.
Thus we are minimizing $c + tx + hx$ subject to $1 \leq t \leq n$;
since everything but $t$ is a constant in the objective function,
the cost is clearly minimized by setting $t = 1$.  In other words,
the agent should ship the package right away.

But in Akerlof's story, he did something that should be familiar from
one's own everyday experience: he procrastinated.  Although there were
no unexpected changes to the trade-offs involved in shipping the package,
when each new day arrived there seemed to be other things that were more 
crucial than sending it out that day, and so each day he resolved 
that he would instead send it tomorrow.  The result was that the package
was not sent out until the end of the time period.
(In fact, he sent it a few months into the time period 
once something unexpected did happen to change the cost structure ---
a friend offered to send it for him as part of a larger shipment ---
though this wrinkle is not crucial for the story.)

There is a natural way to model an agent's decision to procrastinate,
using the notion of {\em present bias} --- the tendency to view
costs and benefits that are incurred at the present moment to be
more salient than those incurred in the future.
In particular, 
suppose that for a constant $b > 1$, costs that one must incur
in the current time period are increased by a factor of $b$
in one's evaluation.\footnote{Note that there is no time-discounting
in this example, so the factor of $b$ is only applied to the present
time period, while all future time periods are treated equally.
We will return to the issue of discounting shortly.}
Then in Akerlof's example, when the agent on day $t$ is
considering the decision to send the package, the cost of sending
it on day $t$ is $b c + (t + h)x$, while the cost of sending it on 
day $(t + 1)$ is $c + (t + 1 + h) x$.
The difference between these two costs is $(b - 1) c - x$, and so if
$(b - 1) c > x$, the agent will decide on each day $t$ that the optimal
plan is to wait until day $t + 1$; things will continue this way until
day $n$, when waiting is no longer an option and the package must be sent.

\xhdr{Quasi-Hyperbolic Discounting}
Building on considerations such as those above, and others in earlier
work in economics \cite{strotz-time-inconsist,pollak-time-inconsist},
a significant amount of work has developed around a model of
time-inconsistency known as {\em quasi-hyperbolic discounting}
\cite{laibson-quasi-hyperbolic}.
In this model, parametrized by quantities $\beta, \delta \leq 1$,
a cost or reward of value $c$ that will be realized 
at a point $t \geq 1$ time units into the future is evaluated as having
a present value of $\beta \delta^t c$.
(In other words, values at time $t$
are discounted by a factor of $\beta \delta^t$.)
With $\beta = 1$ this is the standard functional form for 
exponential discounting, but when $\beta < 1$ the function
captures present bias as well: values in the present time period
are scaled up by $\beta^{-1}$ relative to all other periods.
(In what follows, we will consistently use $b$ to denote $\beta^{-1}$.)

Research on this $(\beta,\delta)$-model of discounting has been
extensive, and has proceeded in a wide variety of directions;
see Frederick et al. \cite{frederick-time-inconsist-surv} for a review.
To keep our analysis clearly delineated in scope, we make certain
decisions at the outset relative to the full
range of possible research questions:
we focus on a model of agents who are {\em naive}, in that they
do not take their own time-inconsistency into account when planning;
we do not attempt to derive the $(\beta,\delta)$-model from more
primitive assumptions but rather take it as a self-contained description 
of the agent's observed behavior; 
and we discuss the case of $\delta = 1$
so as to focus attention on the present-bias parameter $\beta$.
Note that the initial Akerlof example has all these properties;
it is essentially described in terms of the $(\beta,\delta)$-model
with an agent who is naive about his own time-inconsistency, with
$\delta = 1$, and with $\beta = b^{-1}$ (using the parameter
$b$ from that discussion).

Our starting point in this paper is to think about some of the 
qualitative predictions of the $(\beta,\delta)$-model, and how
to analyze them in a unified framework.  
In particular, research in behavioral economics has shown how agents making
plans in this model can exhibit the following behaviors.
\begin{enumerate}
\item {\em Procrastination}, as discussed above.
\item {\em Abandonment} of long-range tasks, in which a person starts
on a multi-stage project but abandons it in the middle, even though
the underlying costs and benefits of the project have remained
essentially unchanged \cite{odonoghue-long-term}.\footnote{For 
purposes of our discussion, we
distinguish abandonment of a task from the type of procrastination 
exhibited by Akerlof's example, in which the task is eventually finished,
but at a much higher cost due to the effect of procrastination.}
\item The benefits of {\em choice reduction}, in which reducing the set of
options available to an agent can actually help them reach a goal more
efficiently 
\cite{odonoghue-now-or-later,kaur-self-control-aer}.
A canonical example is the way in which imposing a
deadline can help people complete a task that might not get finished
in the absence of a deadline 
\cite{ariely-deadlines02}.
\end{enumerate}

These consequences of time-inconsistency, as well as a number
of others, have in general each required their own separate 
and sometimes quite intricate modeling efforts.
It is natural to ask whether there might instead be a single framework for
representing tasks and goals in which all of these effects could instead
emerge ``mechanically,'' each just as a different instance of the same generic 
computational problem.
With such a framework, it would become possible to search for worst-case
guarantees, by quantifying over all instances, and to talk about
designing or modifying given task structures to induce certain
desired behaviors.
% compare different task structures according to the behaviors they induce.

\xhdr{The present work: A graph-theoretic model}
Here we propose such a framework, using a graph-theoretic formulation.
We consider an agent with present-bias parameter $\beta$ who must
construct a path in a directed acyclic graph $G$ with edge costs,
from a designated start node $s$ to a designated target node $t$. 
We will call such a structure a {\em task graph}.
Informally, 
the nodes of the task graph represent states of intermediate progress
toward the goal $t$, and the edges represent transitions between them.
Directed graphs have been shown to have considerable expressive
power for planning problems in the artificial intelligence literature
\cite{russell-norvig-book}; this provides evidence for the robustness
of a graph-based approach in representing these types of decision environments.
Our concerns in this work, however, are quite distinct
from the set of graph-based planning problems in artificial intelligence,
since our aim is to study the particular consequences of time-inconsistency
in these domains.

A sample instance of this problem is depicted in Figure \ref{fig:intro-ex01},
with the costs drawn on the edges.
When the agent is standing at a node $v$, it determines the minimum cost
of a path from $v$ to $t$, but it does so using its present-biased evaluation
of costs: the cost of the first edge on the path (starting from $v$) is 
evaluated according to the true cost, and all subsequent edges have
their costs reduced by $\beta$.
If the agent chooses path $P$, it follows just the first edge $(v,w)$ of $P$,
and then it re-evaluates which path to follow
using this same present-biased evaluation but now from the node $w$.
In this way, the agent iteratively constructs a path from $s$ to $t$.

In the next section we will show how our graph-theoretic model 
easily captures time-inconsistency phenomena including
procrastination, abandonment, and choice reduction.
But to make the definitions concrete, it is useful to work through
the agent's computation on the graph depicted in Figure \ref{fig:intro-ex01}.
In Figure \ref{fig:intro-ex01}, an agent that has a present-bias parameter of $\beta = 1/2$ needs to go from $s$ to $t$.
From $s$, the agent evaluates the path $s$-$a$-$b$-$t$ as having cost 
$16 + 2 \beta + 2 \beta = 18$, the path 
$s$-$c$-$d$-$t$ as having cost $8 + 8 \beta + 8 \beta = 16$, and the path 
$s$-$c$-$e$-$t$ as having cost $8 + 2 \beta + 16 \beta = 17$.
Thus the agent traverses the edge $(s,c)$ and ends up at $c$.
From $c$, the agent now evaluates the path $c$-$d$-$t$ as having cost
$8 + 8 \beta = 12$ and the path $c$-$e$-$t$ as having cost
$2 + 16 \beta = 10$, and so the agent traverses the edge 
$(c,e)$ and then (having no further choices) continues on the edge $(e,t)$.

\begin{figure}[t]
\begin{center}
\includegraphics[width=2.50in]{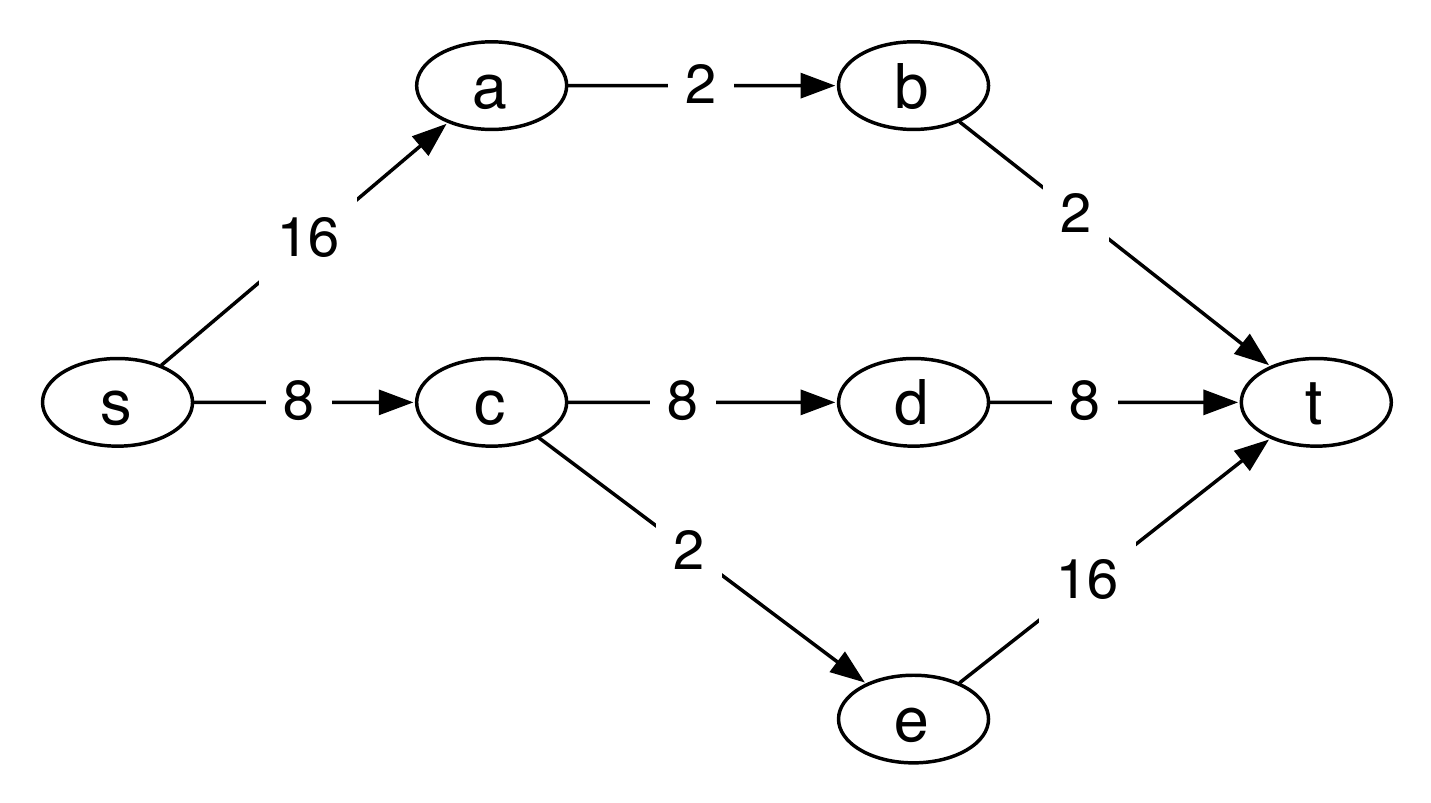}
\end{center}
\caption{A present-biased agent must choose a path from $s$ to $t$.}
\label{fig:intro-ex01}
\end{figure}

This example illustrates a few points.  First, when the
agent set out on the edge $(s,c)$, it was intending to next follow
the edge $(c,d)$, but when it got to $c$, it changed its mind and
followed the edge $(c,e)$.  A time-consistent agent (with $\beta = 1$), 
by contrast, would never do this; the path it decides to take starting at $s$
is the path it will continue to follow all the way to $t$.
Second, we are interested in whether the agent minimizes the cost
of traveling from $s$ to $t$ according to the real costs, not according
to its evaluation of the costs, and in this regard it fails to do so;
the shortest path is $s$-$a$-$b$-$t$, with a cost of $20$, while
the agent incurs a cost of $26$.

\xhdr{Overview of Results}
Our graph-theoretic framework makes it possible to reason about 
time-inconsistency effects that arise in very different settings,
provided simply that the underlying decisions faced by the agent can be
modeled as the search for a path through a graph-structured
sequence of options.
And perhaps more importantly, since it is tractable to ask questions
that quantify over all possible graphs, we can cleanly compare
different scenarios, and search for the best or worst possible
structures relative to specific objectives.
This is difficult to do without an underlying combinatorial structure.
For example, suppose we were inspired by Akerlof's example to try
identifying the scenario in which time-inconsistency leads to the
greatest waste of effort.
{\em A priori}, it is not clear how to formalize the search over
all possible ``scenarios.''
But as we will see, this is precisely something we can do if 
we simply ask for the graph in which time-inconsistency produces
the greatest ratio between the cost of the 
path traversed and cost of the optimal path.

Moreover, with this framework in place, it becomes easier to express
formal questions about design for these contexts: if as a designer of
a complex task we are able to specify the underlying graph structure,
which graphs will lead time-inconsistent agents to reach the goal
as efficiently as possible?

Our core questions are based on quantifying the inefficiency
from time-inconsistent behavior, 
designing task structures to reduce this inefficiency, and 
comparing the behavior of agents with different levels of time-inconsistency.
Specifically, we ask:
\begin{enumerate}
\item 
In which graph structures does time-inconsistent planning have the potential
to cause the greatest waste of effort relative to optimal planning?
\item 
How do agents with different levels of present bias (encoded as different
values of $\beta$) follow combinatorially different paths through 
a graph toward the same goal?
\item 
Can we increase an agent's efficiency in reaching a goal by deleting nodes
and/or edges from the underlying graph, thus reducing the number of options 
available?
\item 
How do we structure tasks for a heterogeneous collection of agents with
diverse values of $\beta$, so that each agent either reaches the goal
or abandons it quickly without wasting effort?
\end{enumerate}

In what follows, we address these questions in turn.
For the first question, we consider $n$-node graphs and 
ask how large the {\em cost ratio} can be between the path followed by
a present-biased agent and the path of minimum total cost.
Since deviations from the minimum-cost plan due to present bias
are sometimes viewed as a form of ``irrational'' behavior,
this cost ratio effectively serves as a ``price of irrationality''
for our system.
We give a characterization of the worst-case graphs in terms of
{\em graph minors}; this enables us to show, roughly speaking,
that any instance with sufficiently high cost ratio must contain
a large instance of the Akerlof example embedded inside it.

For the second question, we consider the possible paths followed
by agents with different present-bias parameters $\beta$.
As we sweep $\beta$ over the interval $[0,1]$, we have a type
of {\em parametric} path problem, where the choice of 
path is governed by a continuous parameter ($\beta$ in this case).
We show that in any instance, the number of distinct paths is bounded
by a polynomial function of $n$, which forms an interesting contrast
with canonical formulations of the parametric shortest-path problem,
in which the number of distinct paths can be superpolynomial in $n$
\cite{carstensen-parametric-sp,nikolova-parametric-sp}.

The third and fourth questions are essentially design questions:
we must design a set of tasks to optimize the performance
of a present-biased agent.
For the third question, we show how it is possible for agents to be more
efficient when nodes and/or edges are deleted from the underlying graph;
on the other hand, if we want to motivate an agent
to follow a particular path $P$ through the graph, it can be crucial
to present the agent with a subgraph that includes not just $P$ 
but also certain additional nodes and edges that do not belong to $P$.
We give a graph-theoretic characterization of the 
possible subgraphs supporting efficient traversal.
Finally, for heterogeneous agents, 
we explore a simple variant of the problem
based on partitioning large tasks into smaller ones.

Before turning to these questions, we first discuss the basic graph-theoretic
problem in more detail, showing how instances of this problem capture
the time-inconsistency phenomena discussed earlier in this section.

\section{The Graph-Theoretic Model}

In order to argue that our graph-theoretic model captures a variety
of phenomena that have been studied in connection with time-inconsistency,
we present a sequence of examples to illustrate some of the
different behaviors that the model exhibits.
We note that the example in Figure \ref{fig:intro-ex01} already 
illustrates two simple points: that the path chosen by the agent can
be sub-optimal; and that even if the agent traverses an edge $e$
with the intention of following a path $P$ that begins with $e$, 
it may end up following a different path $P'$ that also begins with $e$.

For an edge $e$ in $G$, let $c(e)$ denote the cost of $e$;
and for a path $P$ in $G$, let $e_i(P)$ denote the $i^{\rm th}$ edge on $P$.  
In terms of this notation, the agent's decision is easy to specify:
when standing at a node $v$, it chooses the path $P$ that minimizes
$c(e_1(P)) + \beta \sum_{i > 1} c(e_i(P))$ over all $P$ that run from $v$ 
to $t$.  It follows the first edge of $P$ to a new node $w$, and
then performs this computation again.

We begin by observing that Figure \ref{fig:akerlof-ex02}
represents a version of the Akerlof example from the introduction.
(For simplicity we assume that the delivery of the package is
instantaneous, so $h = 0$.  
Also recall that we use $b$ to denote $\beta^{-1}$.)
Node $t$ represents the state in which the agent has sent the package,
and node $v_i$ represents the state in which the agent has reached day $i$
without sending the package.
The agent has the option of going directly from node $s$ to node $t$,
and this is the shortest $s$-$t$ path.
But if $(b - 1) c > b x$, then the agent will instead go from $s$ to $v_1$,
intending to complete the path $s$-$v_1$-$t$ in the next time 
step. 
At $v_1$, however, the agent decides to go to $v_2$, intending 
to complete the path $v_1$-$v_2$-$t$ in the next time step.
This process continues: the agent, following exactly
the reasoning in the example from the introduction, is procrastinating
and not going to $t$, and in the end its path goes all the way to the
last node $v_n$ ($n = 5$ in the figure) 
before finally taking an edge to $t$.
(One minor change from the set-up in the introduction is the fact
that the present-bias effect here holds more consistently, and
is applied to $x$ as well; this has no real effect on the underlying story.)

\begin{figure}[t]
\begin{center}
\subfigure[\emph{The Akerlof example}]{
\includegraphics[width=2.50in]{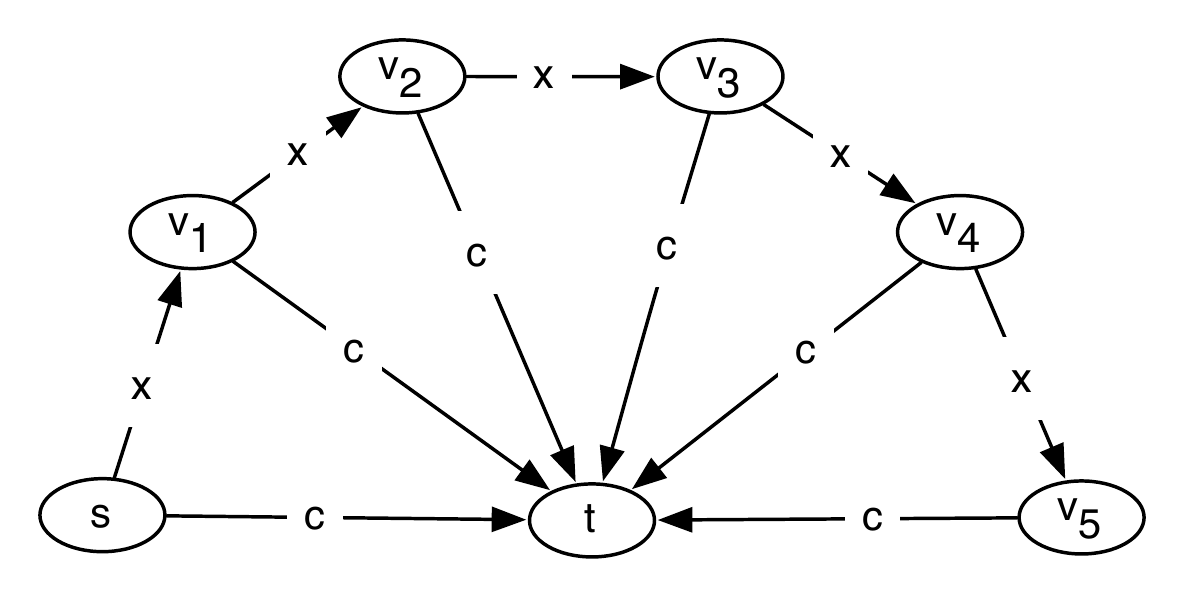}
 \label{fig:akerlof-ex02}
}
\hspace*{0.25in}
\subfigure[\emph{Homework deadlines}]{
\includegraphics[width=2.50in]{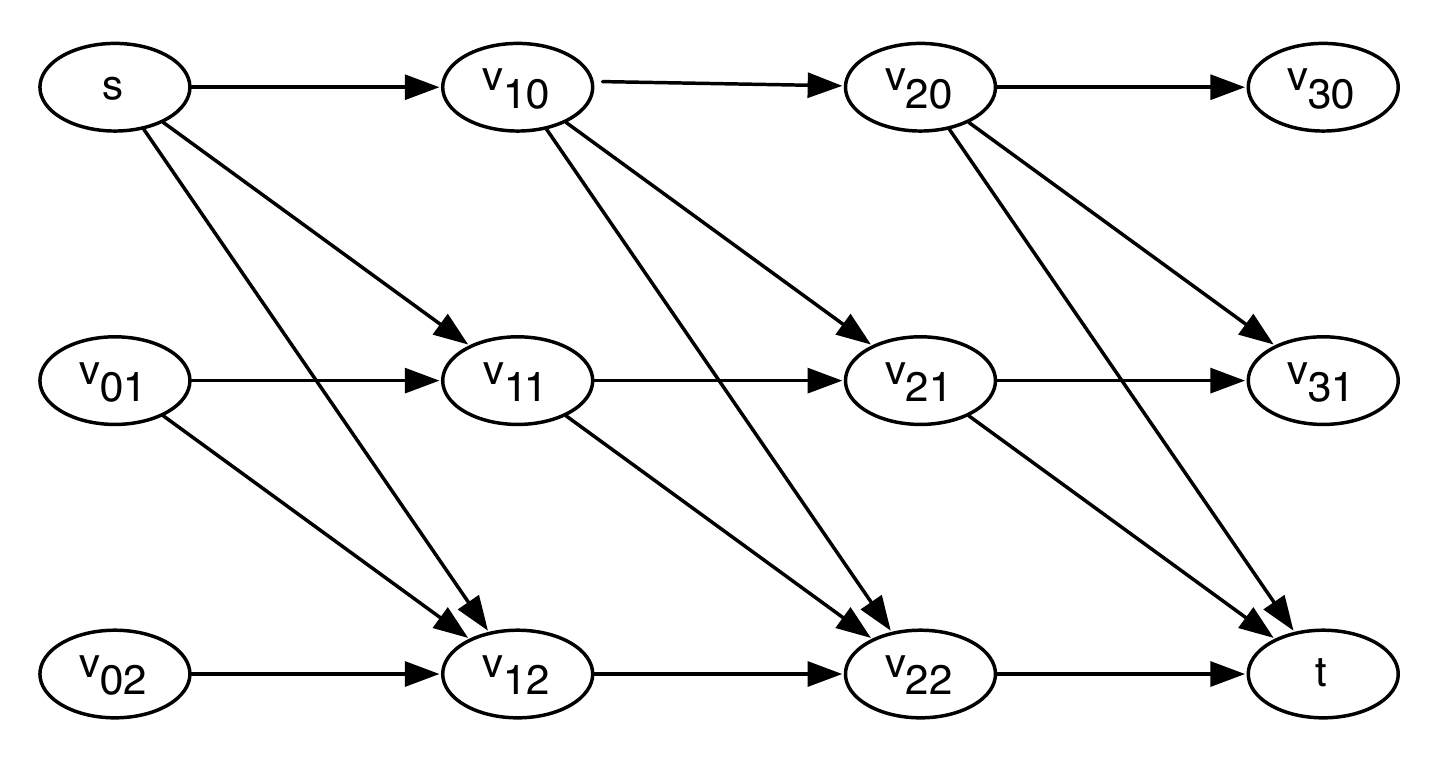}
 \label{fig:deadline-ex01}
}
\end{center}
\caption{Path problems that exhibit procrastination, abandonment, and
choice reduction.}
\label{fig:graph-ex}
\end{figure}

\xhdr{Extending the model to include rewards}
Thus far we can't talk about an agent who abandons its pursuit of
the goal midway through, since our model requires the agent to construct
a path that goes all the way to $t$.
But a simple extension of the model makes it possible to consider
such situations.

Suppose we place a reward of $r$ at the target node $t$, which will 
be claimed if the agent reaches $t$.
Standing at a node $v$, the agent now has an expanded set of options:
it can follow an edge out of $v$ as before, or it can quit taking steps,
incurring no further cost but also not claiming the reward.
The agent will choose the latter option precisely when either there is 
no $v$-$t$ path, or when the minimum 
cost of a $v$-$t$ path exceeds the value of the reward, evaluated
in light of present bias: 
$c(e_1(P)) + \beta \sum_{i > 1} c(e_i(P)) > \beta r$ for
all $v$-$t$ paths $P$.
It is important to note a key feature of this evaluation: the
reward is always discounted by $\beta$ relative to the cost that
is being incurred in the current period, even if the reward will
be received right after this cost is incurred.
(For example, if the path $P$ has a single edge, then the agent
is comparing $c(e_1(P))$ to $\beta r$.)

In what follows, we will consider both these models: the former
{\em fixed-goal model}, in which the agent must reach $t$ and
seeks to minimize its cost; and the latter {\em reward model} 
in which the agent trades off cost incurred against reward at $t$,
and has the option of stopping partway to $t$.
Aside from this distinction, both models share the remaining ingredients,
based on traversing an $s$-$t$ path in $G$.

It is easy to see that the reward model displays the phenomenon of
{\em abandonment}, in which the agent spends some cost to try reaching
$t$, but then subsequently gives up without receiving the reward.
Consider for example a three-node path on nodes $s$, $v_1$, and $t$,
with an edge $(s,v_1)$ of cost $1$ and an edge $(v_1,t)$ of cost $4$.
If $\beta = 1/2$ and there is a reward of $7$ at $t$, then the agent
will traverse the edge $(s,v_1)$ because it evaluates the total 
cost of the path at $1 + 4 \beta = 3 < 7 \beta = 3.5$.
But once it reaches $v_1$, it evaluates the cost of completing
the path at $4 > 7 \beta = 3.5$, and so it quits without reaching $t$.

\xhdr{An example involving choice reduction}
It is useful to describe a more complex example that shows the
modeling power of this shortest-path formalism, and also shows how we can
use the model to analyze deadlines as a form of beneficial choice reduction.
(As should be clear, with a time-consistent agent it can never help 
to reduce the set of choices; such a phenomenon requires some form
of time-inconsistency.)
First we describe the example in text, and then show how to represent
it as a graph.

Imagine a student taking a three-week short course in which the 
required work is to complete two small projects by the end of the course.
It is up to the student when to do the projects, as long as they
are done by the end.
The student incurs an effort cost of $1$ from any week in which she
does no projects (since even without projects there is still the
lower-level effort of attending class),
a cost of $4$ from any week in which she does one
project, and a cost of $9$ from any week in which she does both projects.
Finally, the student receives a reward of $16$ for completing 
the course, and she has a present-bias parameter of $\beta = 1/2$.

Figure \ref{fig:deadline-ex01} shows how to represent this scenario
using a graph.  Node $v_{ij}$ corresponds to a state in which $i$ weeks
of the course are finished, and the student has completed $j$ projects 
so far; we have $s = v_{00}$ and $t = v_{32}$.
All edges go one column to the right, indicating that one week will
elapse regardless; what is under the student's control is how many
rows the edge will span.
Horizontal edges have cost $1$, edges that descend one row have cost $4$, 
and edges that descend two rows in a single hop have cost $9$.
In this way, the graph precisely represents the story just described.

How does the student's construction of an $s$-$t$ path work out?
From $s$, she goes to $v_{10}$ and then to $v_{20}$, intending
to complete the path to $t$ via the edge $(v_{20}, t)$.
But at $v_{20}$, she evaluates the cost of the edge $(v_{20},t)$ as
$9 > \beta r = 16/2 = 8$, and so she quits without reaching $t$.
The story is thus a familiar one: the student plans to do both projects
in the final week of the course, but when she reaches the final week,
she concludes that it would be too costly and so she drops the course instead.

The instructor can prevent this from happening through a very simple
intervention.  If he requires that the first project be completed
by the end of the second week of the course, this corresponds simply to 
deleting node $v_{20}$ from the graph.
With $v_{20}$ gone, the path-finding problem changes: now the student
starting at $s$ decides to follow the path $s$-$v_{10}$-$v_{21}$-$t$,
and at $v_{10}$ and then $v_{21}$ she continues to select this path,
thereby reaching $t$.
Thus, by reducing the set of options available to the student --- and
in particular, by imposing an intermediate deadline to enforce progress ---
the instructor is able to induce the student to complete the course.

There are many stories like this one about homework and deadlines,
and our point is not to focus too closely on it in particular.
Indeed, to return to one of the underpinnings of our graph-theoretic formalism,
our point is in a sense the opposite:
it is hard to reason about the space of possible ``stories,''
whereas it is much more tractable to think about the space of possible graphs.
Thus by encoding the set of stories mechanically in the form of graphs,
it becomes feasible to reason about them as a whole.

We have thus seen how a number of different time-inconsistency 
phenomena arise in simple instances of the path-finding problem.
The full power of the model, however, lies in 
proving
% the fact that we can prove 
statements that quantify over all graphs;
we begin this next.

\def\cti{{c_\beta}}
\def\pti{{P_\beta}}
\def\cb{{\bf{c}}}
\def\poa{{\theta}}
\def\tent{{\cal T}}
\def\F{{\cal F}}
\def\orient{{\zeta}}
\def\skel{{\sigma}}
\def\ch{{\cal P}}

\section{The Cost Ratio: A Characterization Via Graph Minors}

Our path-finding model naturally motivates a 
basic quantity of interest: the {\em cost ratio},
defined as the ratio between the cost of the
path found by the agent and the cost of the shortest path.
We work here within the fixed-goal version of the model,
in which the agent is required to reach the goal $t$ and
the objective is to minimize the cost of the path used.

To fix notation for this discussion, given 
a directed acyclic graph $G$ on $n$ nodes
with positive edge costs, we let $d(v,w)$ denote the cost of
the shortest $v$-$w$ path in $G$, when one exists
(using the true edge costs, not modified by present bias).
We are given designated
start and end nodes $s$ and $t$ respectively, 
and we pre-process the graph by deleting every node that
is either not reachable from $s$ or that cannot reach $t$.
Let $\pti(v,t)$ denote the the $v$-$t$ path followed by
an agent with present-bias $\beta$, and
let $\cti(v,t)$ be the total cost of this path.
The cost ratio can thus be written as $\cti(s,t)/d(s,t)$.

\xhdr{A bad example for the cost ratio}
We first describe a simple construction showing that the cost ratio
can be exponential in the number of nodes $n$.
We then move on to the main result of this section, which is a
characterization of the instances in which the cost ratio achieves
this exponential lower bound.

Our construction is an adaptation of the Akerlof example from 
the introduction.
We describe it using edges of zero cost, but it is easy to modify
it to give all edges positive cost.
We have a graph that consists of a directed path $s = v_0, v_1, v_2, 
\ldots, v_n$, and with each $v_i$ also linking directly to node $t$.
(The case $n = 5$ is the graph in Figure \ref{fig:akerlof-ex02}.)
With $b = \beta^{-1}$, we choose any $\mu < b$;
we let the cost of the edge $(v_j,t)$ be $\mu^j$,
and let the cost of each edge $(v_j, v_{j+1})$ be $0$.

Now, when the agent is standing at node $v_j$, it evaluates the
cost of going directly to $t$ as $\mu^j$, while the cost of 
the two-step path through $v_{j+1}$ to $t$ is evaluated as
$0 + \beta \mu^{j+1} = (\beta \mu) \mu^j < \mu^j$.
Thus the agent will follow the edge $(v_j,v_{j+1})$ with
the plan of continuing from $v_{j+1}$ to $t$.
But this holds for all $j$, so once 
it reaches $v_{j+1}$, it changes its mind and continues on to $v_{j+2}$,
and so forth.
Ultimately it reaches $v_n$, and then must go directly to $t$ at
a cost of $\cti(s,t) = \mu^n$.
Since $d(s,t) = 1$ by using the edge directly from $s$ to $t$,
this establishes the exponential lower bound on the cost ratio
$\cti(s,t)/d(s,t)$.
Essentially, this construction shows that the Akerlof example can be
made quantitatively much worse than its original formulation
by having the cost of going directly to the goal
grow by a modest constant factor in each time step; 
when a present-biased agent procrastinates in this case,
it ultimately incurs an exponentially large cost.

As noted above, the fact that some edges in this example have zero cost
is not crucial; we could give each edge $(v_j, v_{j+1})$ a uniform cost
$\eps > 0$ and correspondingly reduce the value of $\mu$ slightly.

\xhdr{A Graph Minor Characterization}
We now provide a structural description of the graphs on which
the cost ratio can be exponential in the number of nodes $n$ ---
essentially we show that a constant fraction of the nodes in 
such a graph must have the structure of the Akerlof example.

We make this precise using the notion of a {\em minor}.
Given two undirected graphs $H$ and $K$, we say that $H$ {\em contains
a $K$-minor} if we can map each node $\kappa$ of $K$ to a connected
subgraph $S_\kappa$ in $H$, with the properties that 
(i) $S_\kappa$ and $S_{\kappa'}$ are disjoint for every two nodes
$\kappa, \kappa'$ of $K$, and (ii) if $(\kappa, \kappa')$ is an edge of $K$,
then in $H$ there is some edge connecting a node in $S_\kappa$ to
a node in $S_{\kappa'}$.  Informally, the definition means that
we can build a copy of $K$ using the structure of $H$, with 
disjoint connected subgraphs of $H$ playing the role of ``super-nodes''
that represent the nodes of $K$, and with the adjacencies among
these super-nodes representing the adjacencies in $K$.
The minor relation shows up in many well-known results in graph
theory, perhaps most notably in Kuratowski's Theorem that a non-planar
graph must contain either the complete graph $K_5$ or the complete
bipartite graph $K_{3,3}$ as a minor
\cite{diestel-graph-theory-book}.

Our goal here is to show that if $G$ has exponential cost ratio, then
its undirected version must contain a large copy of the graph
underlying the Akerlof example as a minor.
In other words, the Akerlof example is not only a way to produce a
large cost ratio, but it is in a sense an unavoidable signature
of any example in which the cost ratio is very large.

We set this up as follows.
Let $\skel(G)$ denote the skeleton of $G$,
the undirected graph obtained by removing
the directions on the edges of $G$.
Let $\F_k$ denote the graph with nodes 
$v_1, v_2, \ldots, v_k$, and $w$, and edges 
$(v_i,v_{i+1})$ for $i = 1, \ldots, k-1$, and
$(v_i, w)$ for $i = 1, \ldots, k$.
We refer to $\F_k$ as the {\em $k$-fan}.

We now claim

\begin{theorem}
For every $\lambda > 1$ there exist $n_0 > 0$ and $\eps > 0$ such that if
$n \geq n_0$ and $\cti(s,t) / d(s,t) > \lambda^n$, then $\skel(G)$ contains
an $\F_k$-minor for some $k \geq \eps n$.
\label{thm:minor}
\end{theorem}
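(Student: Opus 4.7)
The plan is to trace the agent's chosen path $P = u_0, u_1, \ldots, u_m$ (with $u_0 = s$ and $u_m = t$) and construct the fan minor using $P$ itself as the spine, together with shortest paths from the $u_i$'s to $t$. Writing $c_i = c(u_i, u_{i+1})$ and $D_i = d(u_i, t)$, the agent's local optimality at $u_i$---its $\beta$-evaluated continuation $c_i + \beta D_{i+1}$ is at most the evaluated cost of a true shortest $u_i$-$t$ path, which in turn is at most $D_i$ since every non-leading edge is scaled by $\beta \leq 1$---yields the recursion $D_{i+1} \leq b(D_i - c_i)$ with $b = 1/\beta$. Summing the inequalities $c_i + \beta D_{i+1} \leq D_i$ telescopes (using $D_i \leq b^i D_0$) to the clean bound $\cti(s,t) \leq D_0 \cdot b^{m-1}$, so the hypothesis $\cti(s,t)/d(s,t) > \lambda^n$ immediately forces $m = \Omega_{\beta,\lambda}(n)$; the agent's path itself has a linear number of nodes.

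Next, I partition $P$ into $k$ consecutive segments $P_1, \ldots, P_k$, set $S_{v_j} = P_j$, and let $S_w$ contain $t$, the tail of $P$ past $P_k$, and a ``spoke path'' $Q_j$ from a well-chosen representative $r_j \in P_j$ to $t$ (with $r_j$ removed) for each $j$. Consecutive $S_{v_j}, S_{v_{j+1}}$ are joined by the $P$-edge between them, and $S_w$ is connected because every $Q_j$ terminates at $t$. The central obstacle is to ensure $S_w$ stays disjoint from every $S_{v_l}$: no $Q_j$ may pass, past its startpoint, through another representative $r_l$ or through the interior of another segment $P_l$.

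To handle this disjointness, I fix a shortest-path tree $T$ of $G$ rooted at $t$ (edges reversed), let $R_i \subseteq T$ be the unique $u_i$-$t$ path in $T$, and form $T^\star = \bigcup_i R_i$, a subtree of $T$ containing every $u_i$. If we restrict the representatives $r_j$ to be \emph{leaves} of $T^\star$ and take the spokes $Q_j$ to be the corresponding $R_{r_j}$'s, then by definition no other $R_l$ traverses $r_j$ as an interior vertex---handling one side of the disjointness cleanly, modulo a bookkeeping step that redistributes non-leaf $u_l$'s between $S_w$ and the segments and loses at most a constant fraction of the representatives. The technical heart of the argument is then to show that $T^\star$ has $\Omega(n)$ leaves: if it had only $k$ leaves it would be a thin tree dominated by long degree-$2$ chains, and along each chain the recursion $D_{i+1} \leq b(D_i - c_i)$ amortizes to bound that chain's contribution to the cost ratio by a factor depending only on $\beta$. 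Combining across at most $O(k)$ chains would give a bound of the form $\cti(s,t)/d(s,t) \leq M(\beta)^{k} \cdot \mathrm{poly}(n)$; comparing with the hypothesis $\cti(s,t)/d(s,t) > \lambda^n$ forces $k = \Omega_{\beta,\lambda}(n)$, delivering the claimed $\F_k$-minor in $\skel(G)$. Making this per-chain amortization precise---and accounting for how the agent's path $P$ actually threads through the branches of $T^\star$---is where I expect the main technical difficulty to lie.
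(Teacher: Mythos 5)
Your opening computation is essentially the paper's first step in disguise: the inequality $c_i + \beta D_{i+1} \leq D_i$, hence $D_{i+1} \leq b\,D_i$, is exactly what the paper packages as a ``rank'' $r(v) \approx \log_b \bigl( d(v,t)/d(s,t) \bigr)$ that increases by at most $1$ per step of the agent's path, and your conclusion that the path is long is a weaker cousin of the paper's claim that some node on the path has rank at least $\eps n$ (which the paper gets by isolating a single very expensive edge, not by summing). That part is fine. The genuine gap is in the minor construction. Choosing the segments of $P$ ``consecutively'' and the representatives $r_j$ as leaves of the shortest-path tree $T^\star$ does not give the disjointness you need: a leaf condition only prevents \emph{other spokes} from passing through $r_j$; it does nothing to stop the spoke $Q_j = R_{r_j}$ from passing through non-representative nodes $u_i$ of $P$ that lie in later segments (such $u_i$ are ancestors of $r_j$ in $T$, and leaves have no constraint on their ancestors). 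Your proposed fix --- moving those $u_i$ into $S_w$ --- breaks the construction, because deleting interior nodes from a segment $P_l$ either disconnects $S_{v_l}$ or destroys the edge between consecutive segments that the fan minor requires. Separately, the claim that $T^\star$ has $\Omega(n)$ leaves (equivalently, that few leaves force a small cost ratio via a per-chain amortization) is exactly the step you flag as unresolved, and it is not a routine verification; as written the argument does not close.

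The paper avoids all of this with one choice that your proposal is missing: take $v_j$ to be the \emph{last} node of rank $j$ on the prefix $P$ of the agent's path ending at the high-rank node, and let $Q_j$ be a true shortest $v_j$-$t$ path. Then every node $w$ on $Q_j$ satisfies $d(w,t) \leq d(v_j,t)$ and hence has rank at most $j$, while every node of $P$ \emph{after} $v_j$ has rank strictly greater than $j$ (by the ``last'' choice), so $Q_j$ cannot revisit the later part of $P$; it cannot revisit the earlier part either, since that would create a directed cycle. This yields $Q_j \cap P = \{v_j\}$ immediately, the union $\bigcup_j (Q_j - \{v_j\})$ is connected through $t$ and serves as the hub super-node, and no shortest-path tree or leaf-counting is needed. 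To repair your proof you should replace the arbitrary consecutive segmentation and the tree-leaf mechanism with this rank-based selection of representatives; without it, the disjointness of the hub from the spine is unestablished and the theorem is not proved.
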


\begin{proof}
The idea behind the proof is as follows.
For each node, we consider a ``rounded'' version of its distance to $t$ ---
essentially the logarithm of its distance to $t$, truncated
down to an integer value.
We argue that there exists a node $v$ on the path taken by the agent
for which this quantity is large, and we consider the portion $P$ of this path
that runs from $s$ to $v$.
From many nodes on $P$, other paths emanate back to $t$ that
remain disjoint from $P$.  These other paths together with $P$
provide us with the structure from which we can build an $\F_k$-minor.

We now give the detailed argument.
For each node $v$, we define the {\em rank} of $v$, denoted $r(v)$, to be
$0$ if $d(v,t) \leq d(s,t)$, and otherwise it is
the minimum integer $j > 0$ such that $d(v,t) \leq b^j d(s,t)$.

Here is a first basic fact about ranks.
\begin{quote}
{\em 
(A)
If $(v,w)$ is an edge on $\pti(s,t)$, then $r(w) \leq r(v) + 1$.
}
\end{quote}

To prove (A), consider an edge $(v,w)$ on $\pti(s,t)$.  If $(v,w)$ lies 
on a shortest path from $v$ to $t$, then $d(w,t) \leq d(v,t)$ and
hence $r(w) \leq r(v)$.
Otherwise, let $(v,w')$ be an edge on a shortest $v$-$t$ path;
the agent's decision to traverse the edge $(v,w)$ means that 
$$c(v,w) + \beta d(w,t) \leq c(v,w') + \beta d(w',t) \leq d(v,t),$$
and hence $d(w,t) \leq b d(v,t)$.
It follows that $r(w) \leq r(v) + 1$.

Now, suppose that $\cti(s,t) > \lambda^n d(s,t)$.
For future use, we choose constants
$n_0 > 0$ and $\lambda_0 > 1$ such that
when $n \geq n_0$, we have
$n \lambda_0^n < \lambda^n$.
We then set $\eps = \log_b \lambda_0$.

We now claim
\begin{quote}
{\em 
(B)
There exists an edge $(v,w)$ on $\pti(s,t)$ of cost $> \lambda_0^n d(s,t)$.
}
\end{quote}

To prove (B), we observe that 
since the quantity $\cti(s,t)$ is a sum of at most $n$ terms, 
corresponding to the edge costs in $\pti(s,t)$, there exists at
least one term in this sum that is $> \lambda^n d(s,t) / n$.
Suppose it is the cost of the edge $(v,w)$ on $\pti(s,t)$; i.e.
$c(v,w) > \lambda^n d(s,t) / n$.
By our choice of $n_0$ and $\lambda_0$, we then have 
$c(v,w) > \lambda_0^n d(s,t)$ when $n \geq n_0$.

\begin{figure}[t]
\begin{center}
\includegraphics[width=3.00in]{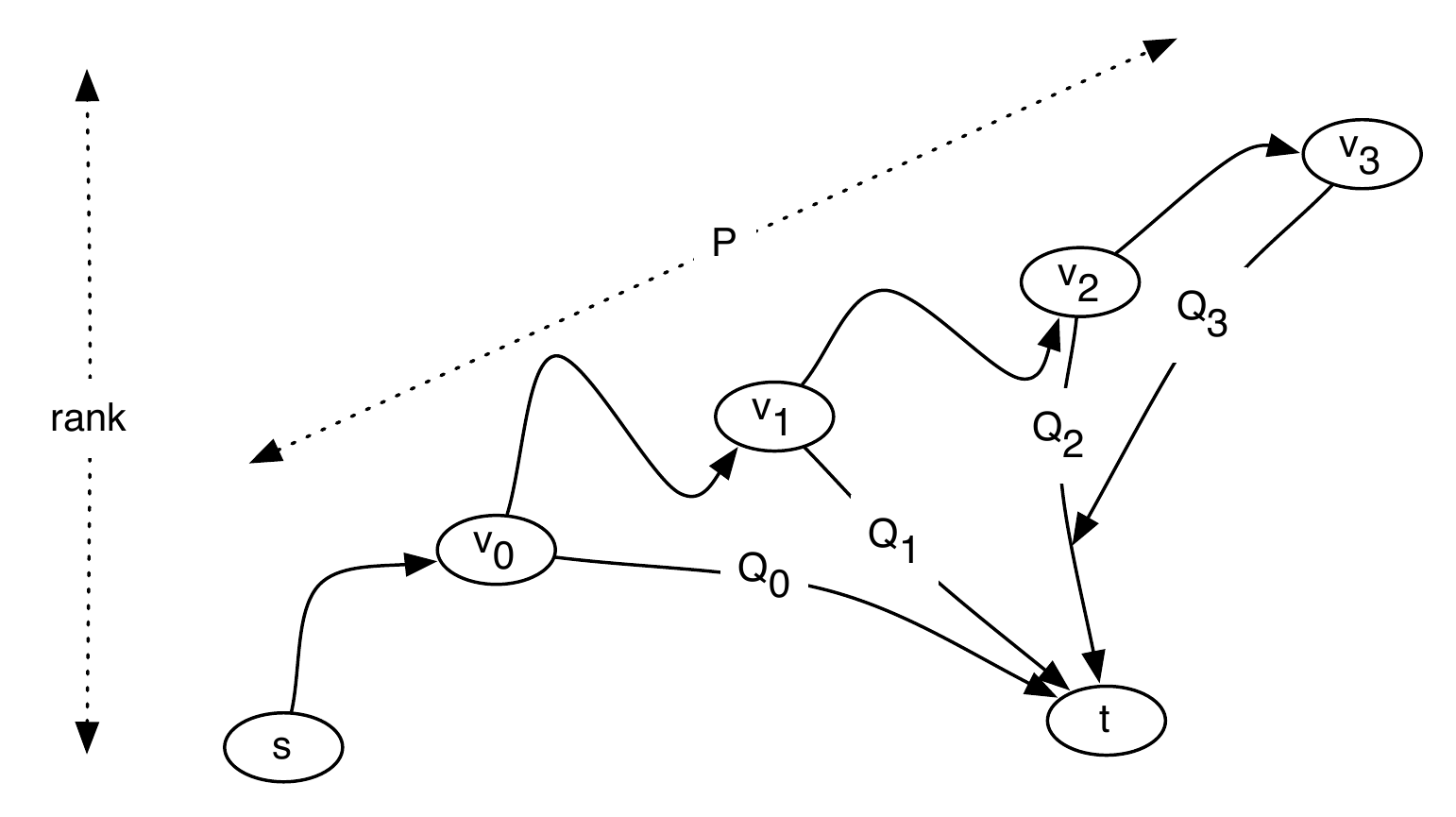}
\end{center}
\caption{The construction of an $\F_k$-minor in 
Theorem \ref{thm:minor}.
\label{fig:minor-fig01}
}
\end{figure}

Next we claim
\begin{quote}
{\em 
(C)
There exists a node $v$ on $\pti(s,t)$ of rank at least $\eps n$.
}
\end{quote}

Indeed, the tail $v$ of the edge $(v,w)$ from (B)
has rank at least $\eps n$.
To see this, consider the agent's decision at node $v$, when it chooses to pay
$c(v,w) > \lambda_0^n d(s,t)$ on its next step.
If $(v,w')$ is the first edge on a shortest $v$-$t$ path, then 
we have 
$d(v,t) = c(v,w') + d(w',t) \geq 
c(v,w') + \beta d(w',t) \geq c(v,w) + \beta d(w,t) \geq c(v,w) 
> \lambda_0^n d(s,t) = b^{\eps n} d(s,t)$,
where $\eps = \log_b \lambda_0$.

The ingredients for the remainder of the proof are depicted
schematically in Figure \ref{fig:minor-fig01}.
Let $P$ denote the subpath of $\pti(s,t)$ consisting 
of the portion from $s$ to $v$, where $v$ is the node defined in 
the statement of (C).
By (A) we know that ranks of consecutive nodes on $P$ can differ
by at most $1$, and so there are nodes on $P$ of each rank from $0$ to $k$,
where $k = \lceil \eps n \rceil$.
For each $j$ from $0$ to $k$, let $v_j$ denote 
the last node of rank $j$ on $P$.

Let $Q_j$ be a shortest $v_j$-$t$ path.
We claim 
\begin{quote}
{\em 
(D)
$Q_j \cap P = \{v_j\}$.  
}
\end{quote}

Indeed, since $d(w,t) \leq d(v_j,t)$
for every node $w \in Q_j$, it follows that 
every node on $Q_j$ has rank at most $j$.
By the definition of $v_j$, all nodes on $P$ that come after $v_j$
have rank greater than $j$, and so no node on $P$ after $v_j$ can
belong to $Q_j$.
And if a node $w$ on $P$ before $v_j$ belonged to $Q_j$, then $G$
would contain a cycle, by combining the path from $w$ to $v_j$ using $P$,
followed by the path from $v_j$ to $w$ using $Q_j$.
Thus $v_j$ is the only node in $Q_j \cap P$, and
this proves (D).

We are now prepared to construct an $\F_{k+1}$-minor in $\skel(G)$.
First, we partition the path $P$ into disjoint segments 
$R_0, R_1, \ldots, R_k$ such that $v_j \in R_j$.
Now we observe that $S = \cup_{j=0}^k (Q_j - \{v_j\})$ is a connected
set of nodes, since all $Q_j$ contain $t$.
By (D), the set $S$ is also disjoint from each $R_j$.
Since there is an edge between $R_j$ and $R_{j+1}$ for each $j$,
and also an edge from $S$ to $v_j \in R_j$ for each $j$,
it follows that $R_0, R_1, \ldots, R_k, S$ form the super-nodes
in an $\F_{k+1}$-minor.
This completes the proof of Theorem \ref{thm:minor}.
\end{proof}

\section{Collections of Heterogeneous Agents}
\label{sec:path-counting}

\def\P{{\cal P}}

Thus far we have focused on the behavior of a single agent with
a given present-bias parameter $\beta$.
Now we consider all possible values of $\beta$, and ask the following
basic question: how large can the set $\{\pti(s,t) : \beta \in [0,1]\}$ be?
In other words, if for each $\beta$, an agent with parameter $\beta$
were to construct an $s$-$t$ path in $G$, how many different paths
would be constructed across all the agents?
Bounding this quantity tells us how many genuinely
``distinct'' types of behaviors there are for the instance defined by $G$.

Let $\P(G)$ denote the set $\{\pti(s,t) : \beta \in [0,1]\}$.
Despite the fact that $\beta$ comes from the continuum $[0,1]$,
the set $\P(G)$ is clearly finite, 
since $G$ only has finitely many $s$-$t$ paths.
The question is whether we can obtain a non-trivial upper bound
on the size of $\P(G)$, and in particular
one that does not grow exponentially in the number of nodes $n$.

In fact this is possible, and our main goal in this section is
to prove the following theorem.

\begin{theorem}
For every directed acyclic graph $G$, the size of 
$\P(G)$ is $O(n^2)$.
\label{thm:path-counting}
\end{theorem}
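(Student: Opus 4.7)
The plan is to reduce the global problem of enumerating paths to a local parametric analysis at each node. The key observation is that an agent standing at $v$ who selects a path $P$ beginning with out-edge $(v,w)$ evaluates that choice as $c(v,w) + \beta \sum_{i > 1} c(e_i(P))$. Since the continuation $P'$ from $w$ to $t$ enters the objective only through $\beta \cdot c(P')$, the agent (at the moment of choosing from $v$) prefers to \emph{plan} the cheapest $w$-$t$ continuation, whose cost is the true shortest-path distance $d(w,t)$. So the edge actually traversed out of $v$ is the one minimizing the linear-in-$\beta$ function
$$\varphi_v(w,\beta) \;=\; c(v,w) + \beta \cdot d(w,t).$$
Crucially, the agent's realized future behavior after $v$ is irrelevant to this local choice, because only the planned best continuation enters the evaluation.

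With this in hand, define the next-edge function $f_\beta(v) := \arg\min_w \varphi_v(w,\beta)$. Then $P_\beta(s,t)$ is obtained by starting at $s$ and iterating $f_\beta$, so distinct paths in $\mathcal{P}(G)$ correspond to distinct functions $f_\beta$. I would then apply the standard ``lower envelope of lines'' argument locally at each node: the map $\beta \mapsto f_\beta(v)$ is piecewise constant with at most $\deg^+(v)$ pieces, since any two of the $\deg^+(v)$ lines $\varphi_v(w,\cdot)$ cross at most once. Summing, the total number of values of $\beta \in [0,1]$ at which \emph{some} node changes its choice is at most
$$\sum_v (\deg^+(v) - 1) \;\leq\; m,$$
where $m$ is the number of edges of $G$. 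Between consecutive such breakpoints, $f_\beta$ is constant and so is $P_\beta(s,t)$. Hence $|\mathcal{P}(G)| \leq m + 1$, and since $G$ is a DAG we have $m \leq \binom{n}{2}$, giving the $O(n^2)$ bound.

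The main conceptual step, and what I would view as the non-obvious part of the argument, is the first one: the local reduction that turns the agent's decision at $v$ into the minimum of $\deg^+(v)$ linear functions of $\beta$ whose slopes are the true shortest-path distances $d(w,t)$. Once this reduction is in place, the rest is essentially an arrangement-counting bound. A minor technical issue to dispose of cleanly is tie-breaking at the finitely many breakpoints where two edges out of $v$ achieve the same value of $\varphi_v(\cdot,\beta)$: one should fix a consistent global tie-breaking rule so that $f_\beta$ is well-defined everywhere and the breakpoint count is exactly as claimed. This does not affect the asymptotics.
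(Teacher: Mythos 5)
Your argument is correct, and its central step---observing that the agent's decision at a node $v$ depends on $\beta$ only through the lower envelope of the lines $\beta \mapsto c(v,w) + \beta\, d(w,t)$ over the out-neighbors $w$, so that each out-edge is selected on an interval and these intervals partition $[0,1]$---is exactly the reduction the paper carries out in Lemma~\ref{lemma:interval-reduction}. The two arguments part ways only in the counting step. The paper abstracts the resulting structure into an ``Interval Labels Problem'' (arbitrary interval partitions on the out-edges of each node) and bounds the number of valid paths by a charging scheme: sweeping the witness $x$ upward, each time the realized path changes, the old path is charged to the first edge it loses, whose interval lies entirely to the left of the new witness and hence can never reappear, so distinct paths are charged to distinct edges. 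You instead count breakpoints globally: each node contributes at most one fewer breakpoint than its out-degree, giving at most $m$ breakpoints overall and hence at most $m+1$ cells on each of which the realized path is constant. Both routes yield an $O(m)=O(n^2)$ bound; yours is slightly more direct, while the paper's charging argument is phrased for the more general interval-labels setting, which it then reuses to exhibit the matching $\Omega(n^2)$ lower-bound instance. One small point worth making explicit in your write-up: at a breakpoint $\beta_0$ itself, different nodes' tie-breaking could in principle combine choices from the two adjacent cells and produce a path distinct from both neighbors; this at worst doubles the count and, as you note, does not affect the asymptotics (the paper sidesteps it by having the per-node intervals literally partition $[0,1]$, so each $\beta$ lies in exactly one cell of the common refinement).
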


As noted in the introduction, this can be viewed as a novel
form of parametric path problem, and for the more standard
genres of parametric path problems, the number of possible
paths has a superpolynomial lower bound. (Probably the most typical
example is to track the set of all shortest paths in a directed
graph when each edge has a cost of the form $a_e x + b_e$ and
$x$ ranges over $[0,1]$
\cite{carstensen-parametric-sp,nikolova-parametric-sp}.)
For our path-counting problem, we need a technique that
gives a polynomial upper bound.

We prove Theorem \ref{thm:path-counting} by first defining a more general 
path-counting problem, and then showing that bounding $\P(G)$
is an instance of this problem.
We begin with two technical considerations.
First, 
as at earlier points in the paper, we assume that we have
pre-processed $G$ so that every node and edge lies on some $s$-$t$
path in $G$; any node or edge that doesn't have this property
is not relevant to counting $s$-$t$ paths and can be deleted
without affecting the result.
Second, we assume that when an agent with present-bias parameter $\beta$
is indifferent between two edges leaving a node $v$ --- i.e. they both
evaluate to equal cost --- then it uses a consistent tie-breaking rule,
such as choosing to go to the node that is earlier in a fixed
topological ordering of $G$.

\xhdr{The Interval Labels Problem}
The more general problem we study is something we call the 
{\em Interval Labels Problem}, and it is defined as follows.
In the Interval Labels Problem, we are given
a directed acyclic graph $G$ with a distinguished source node $s$
and target node $t$, such that every node and edge of $G$ lies on
some $s$-$t$ path.
For each node $v$ of $G$, let $\delta^{out}(v)$ denote the edges
emanating from $v$;
each $e \in \delta^{out}(v)$ is assigned an interval $I(e) \subseteq [0,1]$
such that the intervals $\{I(e) : e \in \delta^{out}(v)\}$ partition $[0,1]$.
This defines an instance of the Interval Labels Problem.

We say that an $s$-$t$ path $P$ in $G$ is {\em valid} if
the intersection $\cap_{e \in P} I(e)$ is non-empty.
We say that $x \in [0,1]$ is a {\em witness} for the
$s$-$t$ path $P$ if $x \in \cap_{e \in P} I(e)$.
Thus a path is valid if and only if it has at least one witness.
The goal in the Interval Labels Problem is to count the number
of valid paths in $G$.

We first justify why counting the number of paths in 
$\P(G) = \{\pti(s,t) : \beta \in [0,1]\}$
can be reduced to an instance of the Interval Labels Problem.

\begin{lemma}
Given a directed acyclic graph $G$ with edge costs,
it is possible to create an instance of the Interval Labels Problem
for which the number of valid paths is precisely the size of $\P(G)$.
\label{lemma:interval-reduction}
\end{lemma}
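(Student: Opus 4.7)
The plan is to define, for each edge $e = (v,w)$ of $G$, the interval $I(e) \subseteq [0,1]$ consisting of those present-bias parameters $\beta$ for which the agent, when standing at $v$, chooses to traverse $e$. I would then show (a) that each such $I(e)$ is genuinely an interval and that the intervals attached to the edges out of any fixed node partition $[0,1]$, so that the labeling constitutes a valid instance of the Interval Labels Problem on $G$; and (b) that a path $P$ in $G$ lies in $\P(G)$ if and only if $\cap_{e \in P} I(e) \neq \emptyset$.

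For step (a), I use the fact (already implicit in the path-finding dynamics) that if the agent at $v$ commits to an outgoing edge $e = (v,w)$, then its best continuation from $w$ is the true-cost shortest $w$–$t$ path, since every edge cost past the first is uniformly discounted by $\beta$. Hence at $v$ the agent evaluates the option $e = (v,w)$ by the affine function $f_e(\beta) = c(e) + \beta\, d(w,t)$, and picks the edge minimizing $f_e(\beta)$. The pointwise minimum of finitely many affine functions of $\beta$ is piecewise linear and concave, and the set of $\beta$ on which any one $f_e$ achieves this minimum is therefore an interval in $[0,1]$. Applying the stated tie-breaking rule at boundary points assigns each crossover cleanly to one edge, giving a partition of $[0,1]$ into intervals $\{I(e) : e \in \delta^{out}(v)\}$, one per outgoing edge.

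Step (b) is then essentially by construction: an agent with parameter $\beta$ follows $P = (s = v_0, v_1, \ldots, v_k = t)$ exactly when, at each $v_i$, it selects the edge $(v_i, v_{i+1})$, i.e.\ exactly when $\beta \in I((v_i, v_{i+1}))$ for every $i$. Thus $P \in \P(G)$ iff some $\beta$ lies in $\cap_{e \in P} I(e)$, iff $P$ is a valid path in the Interval Labels instance, which yields $|\P(G)|$ equal to the number of valid paths.

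The one delicate point is verifying that the $I(e)$ truly are intervals partitioning $[0,1]$, which amounts to the standard observation that the lower envelope of a family of affine functions partitions the parameter axis into intervals; the only bookkeeping is to use the tie-breaking rule to place each boundary $\beta$ in exactly one interval at each node. Everything else in the reduction is immediate from the definitions.
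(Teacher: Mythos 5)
Your proposal is correct and matches the paper's own argument: the paper likewise defines $I(v,w_j)$ as the set of $\beta$ for which the line $L_j(\beta) = c(v,w_j) + \beta\, d(w_j,t)$ lies on the lower envelope of the arrangement of such lines at $v$, invokes the structure of line arrangements to get intervals partitioning $[0,1]$ (with the consistent tie-breaking rule handling boundaries), and then observes that witnesses of a path correspond exactly to agents who follow it. No gaps; this is essentially the same proof.
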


\begin{proof}
For a given node $v$, let $w_1, w_2, \ldots, w_h$ be its out-neighbors;
each $w_j$ has an edge cost $c(v,w_j)$ from $v$ and a
shortest-path distance $d(w_j,t)$ to $t$.
For which values of $\beta$ will an agent standing at $v$ choose to 
go to $w_j$?  It must be the case that
$c(v,w_j) + \beta d(w_j,t) \leq c(v,w_i) + \beta d(w_i,t)$ for all 
$i \neq j$.
Thus, if we define $L_j(\beta)$ to be the line 
$c(v,w_j) + \beta d(w_j,t)$ as a function of $\beta$ defined over
the interval $[0,1]$, the values of $\beta$ at which the edge $(v,w_j)$
is chosen are those $\beta$ for which $L_j$ lies on the lower envelope
of the line arrangement $\{L_1, L_2, \ldots, L_h\}$.
We know from the structure of line arrangements that this set of $\beta$
is an interval $I_j$, and that the intervals $I_1, I_2, \ldots, I_h$
(some of which may be empty) partition $[0,1]$
\cite{edelsbrunner-comp-geom-book}.

We define an instance of the Interval Labels Problem by setting
the interval $I(e)$ for an edge $(v,w_j)$ to be this interval $I_j$.
Now, if $P$ is a path such that $\beta$ belongs to $I(e)$ for each $e \in P$,
then an agent with parameter $\beta$ would select each edge of $P$ in
sequence, and so $P \in \P(G)$.
Conversely, if $P \in \P(G)$, then $\beta$ is a witness for the path $P$,
and so $P$ is valid in the instance of the Interval Labels Problem.
As a result, the number of valid paths in this instance of 
the Interval Labels Problem is equal to the size of $\P(G)$.
This completes the proof of Lemma \ref{lemma:interval-reduction}
\end{proof}

It is therefore enough to put an upper bound on the number of
valid paths in any instance of the Interval Labels Problem, and
we do that in the following lemma.

\begin{lemma}
For any instance of the Interval Labels Problem, the number of
valid paths is at most the number of edges of $G$.
\label{lemma:valid-paths}
\end{lemma}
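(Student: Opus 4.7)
The plan is to identify each valid $s$-$t$ path $P$ with its witness set $J(P) = \bigcap_{e \in P} I(e)$, which is a nonempty subinterval of $[0,1]$ (being an intersection of intervals). Because the intervals at each node partition $[0,1]$ and the deterministic tie-breaking rule from the setup makes the outgoing edge at each node a function of $\beta$ alone, each $\beta \in [0,1]$ is a witness for exactly one valid path --- the path obtained by greedily following, starting at $s$, the unique outgoing edge whose interval contains $\beta$. Hence the witness intervals partition $[0,1]$ into $k$ pieces, where $k$ is the number of valid paths. There are then $k-1$ interior boundary points in $(0,1)$ where consecutive pieces meet, and the plan is to injectively map these $k-1$ boundaries into the edge set of $G$ and exhibit one extra edge that the map misses; this would give $k - 1 \le m - 1$, i.e., $k \le m$.

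For the injection, I would take a boundary $x$ with valid path $P$ witnessed just to its left and $P'$ just to its right, walk along $P$ and $P'$ from $s$ to their first divergence node $v$, and let $e^+$ be the outgoing edge of $v$ used by $P'$. Since the intervals on the outgoing edges at $v$ partition $[0,1]$, and the edge chosen by $P$ at $v$ lies just left of $x$ while $e^+$ lies just right, the point $x$ must be the left endpoint of $I(e^+)$. Defining $\phi(x) = e^+$, injectivity is automatic: an edge $e^+$ belongs to a unique node and its interval has a unique left endpoint, so $\phi(x_1) = \phi(x_2)$ forces $x_1 = x_2$.

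For the missed edge, I would take the path $P_1$ whose witness interval is the leftmost piece of the partition. Since the pieces cover $[0,1]$, this piece has left endpoint $0$, and writing $e_0$ for the first edge of $P_1$ we get $J(P_1) \subseteq I(e_0)$, so $I(e_0)$ also has left endpoint $0$. But every boundary in the image of $\phi$ lies in the open interval $(0,1)$, so $e_0$ cannot equal any $\phi(x)$. Combining, $k - 1 = |\mathrm{image}(\phi)| \le m - 1$, which gives $k \le m$.

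The main subtlety I expect is tracking how the partition structure and tie-breaking convention from the setup propagate through the argument: they are what guarantee both that witness intervals are pairwise disjoint (so they really do partition $[0,1]$ into exactly $k$ pieces) and that, at a boundary, the ``edge chosen just to the right'' is unambiguously defined. Without these properties, a single $\beta$ could witness two distinct paths and the map $\phi$ could fail to be well-defined or fail to be injective.
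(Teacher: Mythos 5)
Your argument is essentially the paper's proof run in mirror image: the paper sweeps $x$ from $0$ to $1$ and charges each valid path to an edge of that path whose interval \emph{ends} at the transition (so that edge can never reappear), whereas you charge each transition to the edge whose interval \emph{begins} there and then account for the off-by-one using the first edge of the leftmost path. The core mechanism is identical --- at each change of path, the two edges at the first divergence node have disjoint intervals, so one of them has an endpoint pinned at the transition value --- so this is the same approach rather than a genuinely different one.

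There is, however, one corner case your write-up does not cover. Nothing in the Interval Labels Problem prevents a valid path $P$ from having $J(P)$ equal to a single point $\{x\}$; in that case the two transitions adjacent to $P$ share the same boundary value $x$, so there are fewer than $k-1$ \emph{distinct} interior boundary points, and your injectivity argument (``an edge's interval has a unique left endpoint, so $\phi(x_1)=\phi(x_2)$ forces $x_1=x_2$'') proves injectivity of a map whose domain is then too small to give $k-1 \le m-1$. The repair is to define $\phi$ on the $k-1$ ordered transitions between consecutive pieces rather than on the set of boundary values, and to observe that when two transitions share the value $x$, the two arriving edges are still distinct: exactly one of their intervals contains $x$ (the edge arriving at the degenerate piece is closed at $x$ on the left, the edge leaving it is open there). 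The same observation rescues the ``missed edge'' step if the leftmost piece is $\{0\}$, since then a transition sits at $0$ but its arriving edge excludes $0$ while the first edge of the leftmost path contains it. With that patch the proof is complete; for comparison, the paper's version of the argument sidesteps this bookkeeping by charging paths (not transitions) and using the fact that a departed edge never reappears.
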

\begin{proof}
Note that since the intervals on the edges emanating from any node partition
$[0,1]$, a number $x$ can be 
a witness for just a single path, which we denote $P_x$.
Conversely, each valid path has at least one witness in $[0,1]$.

To count valid paths, we use the following scheme
that {\em charges} paths uniquely to edges of $G$ that they contain.
We begin with $x = 0$, identify the path
$P_0$ for which it is a witness, and then begin increasing $x$ continuously
until we reach the minimum $x^*$ at which $P_{x^*} \neq P_0$.
Let $e = (v,w)$ be the first edge on $P_0$ that is not in $P_{x^*}$. 
For this edge, we have $x^* \not\in I(e)$; it follows
that $I_e$ is a proper subset of $[0,x^*]$, and hence that
$e$ cannot belong to $P_x$ for any $x \geq x^*$.
We charge $P_0$ to $e$, and since we will maintain the property
that every path is charged to an edge it contains, we will not
charge any further paths to $e$.

In general, each time the path $P_y$ changes, at a value $y^*$,
to a new path $P_{y^*}$, the first edge $f$ of $P_y$ that is not on $P_{y^*}$
must have the property that the right endpoint of $I(f)$ is strictly
below $y^*$, and so the edge $f$ will never appear again on a path
in our counting process.
We charge $P_y$ to $f$.
No path has thus far been charged to $f$, since this is the first
time when a path has had a witness that lies to the right of $I(f)$,
and since $f$ will not appear on any future paths, no path will
be charged to it again.
Thus, $P_y$ is the only path that gets charged to $f$.

Continuing in this way, each path in our counting process gets associated
with a distinct edge of $G$, and hence 
the total number of valid paths
must be at most the number of edges of $G$.
\end{proof}

By Lemma \ref{lemma:interval-reduction}, 
the number of paths in $\P(G)$ is equal to the number of valid paths
in the equivalent instance of the Interval Labels Problem, 
which by Lemma \ref{lemma:valid-paths} is at most the number of
edges of $G$.
Since $G$ has $O(n^2)$ edges, Theorem \ref{thm:path-counting} follows.

\xhdr{A Tight Lower Bound}
We now show that the quadratic bound on the size of $\P(G)$ can't
be improved in the worst case.  We do this by first establishing 
a quadratic lower bound construction for the Interval Labels Problem.

\begin{proposition} \label{prop:interval-example}
There exists an instance of the 
Interval Labels Problem on an $n$-node graph for which 
the number of valid paths is $\Omega(n^2)$.
\end{proposition}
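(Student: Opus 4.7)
The plan is to exhibit a ``bipartite'' DAG on $n$ nodes with $\Theta(n^2)$ distinct $s$-$t$ paths, and then to assign interval labels so that \emph{every} such path is valid. Let $m = \lfloor (n-2)/2 \rfloor$ and take the vertex set $\{s, a_1,\ldots,a_m, b_1,\ldots,b_m, t\}$ with edges $s \to a_i$ and $b_j \to t$ for all $i$ and $j$, together with $a_i \to b_j$ for every pair $(i,j)$. There are exactly $m^2 = \Theta(n^2)$ directed $s$-$t$ paths, one through each $(a_i, b_j)$.

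For the interval labels, partition $[0,1]$ at $s$ into $m$ equal pieces, assigning $J_i = [(i-1)/m,\, i/m]$ to the edge $s \to a_i$. At each internal node $a_i$, I would partition $[0,1]$ into $m$ intervals $K_{i,1},\ldots,K_{i,m}$ (labeling the edges $a_i \to b_1,\ldots,a_i \to b_m$) by choosing $m-1$ cut points $x_1 < \cdots < x_{m-1}$ strictly inside $J_i$ and setting $K_{i,1}=[0,x_1]$, $K_{i,j}=[x_{j-1},x_j]$ for $2 \leq j \leq m-1$, and $K_{i,m}=[x_{m-1},1]$. Then the two ``outer'' intervals $K_{i,1}$ and $K_{i,m}$ each reach into $J_i$ from outside, while every ``inner'' interval lies entirely inside $J_i$, so each of the $m$ intervals leaving $a_i$ intersects $J_i$. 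Finally, give the unique outgoing edge at each $b_j$ the full interval $[0,1]$.

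To finish, observe that the $s$-$t$ path through $a_i$ and $b_j$ is valid iff $J_i \cap K_{i,j}$ is non-empty (the factor from $b_j \to t$ is just $[0,1]$), and this intersection is non-empty for every pair $(i,j)$ by construction. Hence all $m^2 = \Omega(n^2)$ $s$-$t$ paths are valid, proving the proposition.

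The main subtlety is arranging, at each $a_i$, a partition of $[0,1]$ into $m$ pieces \emph{all} of which overlap the comparatively narrow interval $J_i$. The trick is to concentrate every partition boundary inside $J_i$, so that the two extremal pieces reach into $J_i$ from the left and right while the intermediate pieces are contained in $J_i$. Without this ``zoom-in'' design, a generic partition at $a_i$ would overlap $J_i$ in only $O(1)$ places on average, which would give just $O(n)$ valid paths in total rather than the desired $\Omega(n^2)$.
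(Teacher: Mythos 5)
Your construction is correct and is essentially the paper's own proof: the paper also uses a complete bipartite gadget between $s$ and $t$, gives $(s,u_i)$ the interval $[\tfrac{i-1}{n},\tfrac{i}{n}]$, and then subdivides that same interval among the edges leaving $u_i$ (extending the two extremal pieces to $0$ and $1$ so the labels at $u_i$ partition $[0,1]$) --- exactly your ``concentrate all cut points inside $J_i$'' trick, just with explicit cut points $\tfrac{i-1}{n}+\tfrac{j}{n^2}$. No gap; your version is if anything slightly more careful about keeping the total node count at $n$.
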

\begin{proof}
We consider a complete bipartite graph 
where the nodes on the left are $u_1, \ldots,
u_n$, the nodes on the right are $v_1, \ldots, v_n$, and there
is an edge $(u_i,v_j)$ for all $1 \leq i, j \leq n$.
There is a source node $s$ with edges $(s,u_i)$ for all $i$,
and a target node $t$ with edges $(v_j,t)$ for all $j$.

Now, the interval on
$(s,u_i)$ is $[\frac{i-1}{n}, \frac{i}{n}]$. The interval on $(v_j,t)$
is $[0,1]$, as there is only one edge out of $v_j$. The interval on
$(u_i,v_j)$ is $[\frac{i-1}{n} + \frac{j-1}{n^2}, \frac{i-1}{n} +
\frac{j}{n^2}]$, except that we extend the interval for $j = 1$ down
to $0$ and the interval for $j = n$ up to $1$ so that these intervals
form a partition of $[0,1]$.  Under this construction for any $i$ and
$j$ the path $s,u_i,v_j,t$ is a valid path since the intersection of
the three edges on this path is simply the interval of the middle edge,
$[\frac{i-1}{n} + \frac{j-1}{n^2}, \frac{i-1}{n} + \frac{j}{n^2}]$.
Therefore, this instance admits a quadratic number of valid paths.
\end{proof}

We can show how to add edge costs to the example in 
Proposition \ref{prop:interval-example} so that all the valid paths
become members of $\P(G)$.  We omit the details of the construction
in this version.

\begin{proposition} \label{prop:pg-example}
There exists a directed acyclic graph $G$ with edge costs for
which $\P(G)$ has size $\Omega(n^2)$.
\end{proposition}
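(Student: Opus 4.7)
The plan is to instantiate the bipartite construction of Proposition~\ref{prop:interval-example} with edge costs that realize its interval partitions as the lower envelopes of the decision lines $L_j(\beta) = c(v, w_j) + \beta d(w_j, t)$ from the proof of Lemma~\ref{lemma:interval-reduction}. Reusing the graph on $s, u_1, \ldots, u_n, v_1, \ldots, v_n, t$, I first set $c(v_j, t) = n - j + 1$ so that the slopes $d(v_j, t)$ are strictly decreasing in $j$; this guarantees that the $n$ decision lines out of each $u_i$ have distinct slopes and can all simultaneously appear on a single lower envelope.

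Next, at each $u_i$ I choose costs $c(u_i, v_j)$ so that $L_{j-1}$ and $L_j$ cross at $\beta = (i-1)/n + j/n^2$. Because the breakpoints are strictly increasing in $j$ and the slopes are strictly decreasing in $j$, all $n$ lines appear on the lower envelope at $u_i$ in order, carving $[0,1]$ into $n$ intervals each of which has nonempty intersection with $[(i-1)/n, i/n]$. I then add a per-node offset $g(i)$ to every $c(u_i, v_j)$: this shifts all decision lines out of $u_i$ uniformly and therefore preserves the breakpoints at $u_i$, but it shifts $d(u_i, t)$ by $g(i)$. Choosing $g(i)$ appropriately makes $d(u_i, t)$ a strictly decreasing function of $i$. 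Finally, I set $c(s, u_i)$ so that the lines $c(s, u_i) + \beta d(u_i, t)$ at $s$, now of distinct slopes, have successive lower-envelope crossings at $\beta = i/n$ for $i = 1, \ldots, n-1$; again these are valid crossings of a convex lower envelope because the slopes are strictly decreasing in $i$.

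With these costs in place, an agent with present-bias $\beta \in [(i-1)/n, i/n]$ chooses $u_i$ at $s$ and then $v_j$ at $u_i$ for the unique $j$ whose subinterval at $u_i$ contains $\beta$. Since each of the $n$ subintervals at $u_i$ meets $[(i-1)/n, i/n]$, every pair $(i,j)$ yields a distinct path $s, u_i, v_j, t$ in $\P(G)$, giving $n^2$ paths on a graph with $2n+2$ nodes and hence the $\Omega(n^2)$ lower bound. The main obstacle is purely bookkeeping: verifying that all edge costs remain strictly positive (handled by taking $g(i)$ and the base costs $c(u_i, v_1)$ and $c(s, u_1)$ large enough) and that no intended decision line is dominated on its intended part of the envelope (immediate from strictly monotone slopes paired with strictly monotone breakpoints). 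Tie-breaking at the breakpoint values themselves is harmless since we only need one interior witness per $(i,j)$.
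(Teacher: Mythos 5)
Your construction is correct and is precisely the instantiation the paper has in mind: the authors state that Proposition~\ref{prop:pg-example} follows by ``adding edge costs to the example in Proposition~\ref{prop:interval-example}'' and omit the details, and your choice of strictly decreasing slopes $d(v_j,t)$ and $d(u_i,t)$ with increasing breakpoints supplies exactly those details via the lower-envelope correspondence of Lemma~\ref{lemma:interval-reduction}. The bookkeeping you flag (positivity of costs, all lines appearing on the envelope, interior witnesses) is indeed the only thing to check, and it goes through as you describe.
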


\section{Motivating an Agent to Reach the Goal}

We now consider the version of the model with rewards: 
there is a reward at $t$, and the agent has the additional option
of quitting if it perceives --- under its present-biased evaluation ---
that the value of the reward is not worth the remaining cost in the path.

Note that the presence of the reward does not affect the agent's
{\em choice} of path, only whether it continues along the path.
Thus we can clearly determine the minimum reward $r$ required to motivate
the agent to reach the goal in $G$ by simply having it construct
a path to $t$ according to our standard fixed-goal model, 
identifying the node at which it perceives the remaining cost to be
the greatest (due to present bias this might not be $s$), and
assigning this maximum perceived cost as a reward at $t$.

A more challenging question is suggested by the possibility 
of deleting nodes and edges from $G$; recall that
Figure \ref{fig:deadline-ex01} showed a basic example in 
which the instructor of a course was able to motivate a student
to finish the coursework by deleting a node from the underlying graph.
(This deletion essentially corresponded to introducing a deadline
for the first piece of work.)
This shows that even if the reward remains fixed, in general 
it may be possible for a designer
to remove parts of the graph, thereby reducing the set of
options available to the agent, so as to get the agent to reach the goal.
We now consider the structure of the subgraphs that naturally
arise from this process.

\begin{figure}[t]
\begin{center}
\includegraphics[width=2.50in]{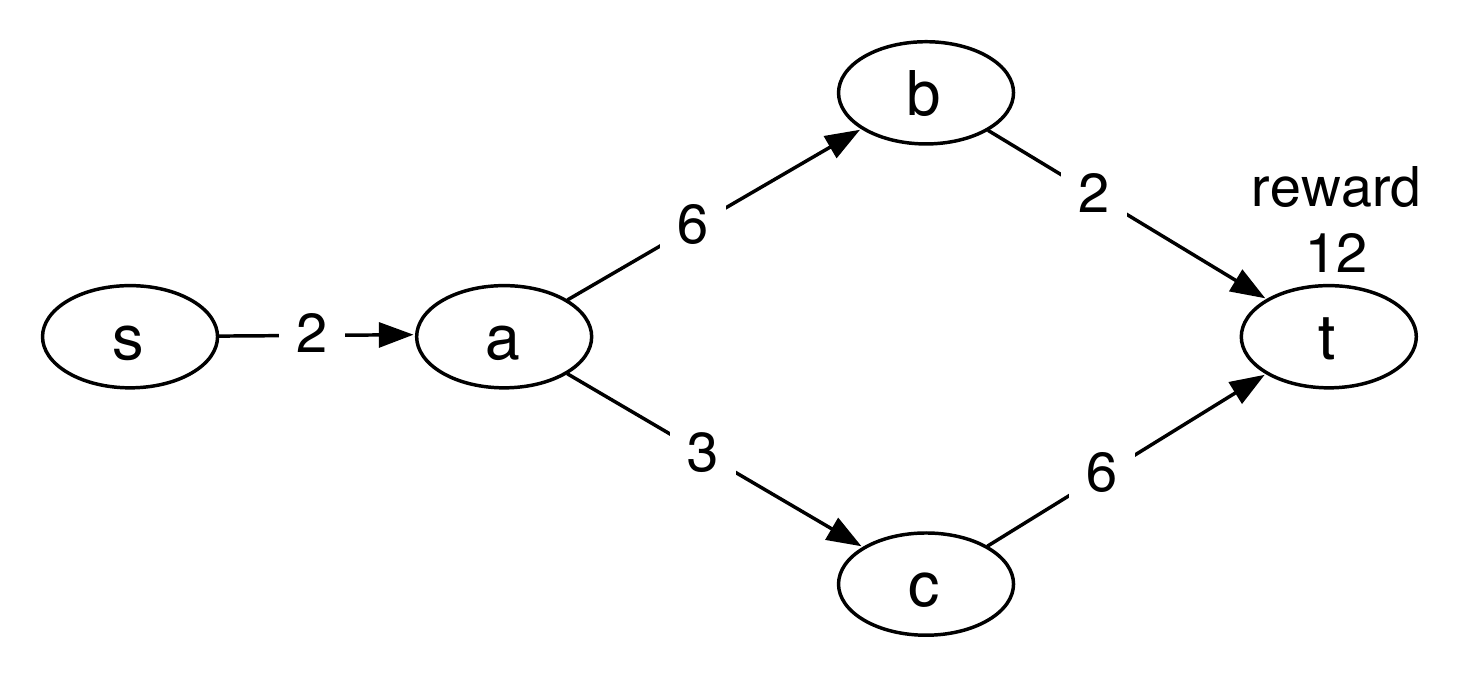}
\end{center}
\caption{A minimal subgraph for motivating an agent to reach $t$.
\label{fig:minimal-ex01}
}
\end{figure}

\xhdr{Motivating subgraphs: A fundamental example}
The basic set-up we consider is the following.
Suppose the agent in the reward model is trying to construct a 
path from $s$ to $t$ in $G$; the reward $r$ is not under our control ---
perhaps it is defined by a third party, or represents an
intrinsic reward that we cannot augment --- but we are able to
remove nodes and edges from the graph (essentially by declaring certain 
activities invalid, as the deadline did in Figure \ref{fig:deadline-ex01}).
Let us say that a subgraph $G'$ of $G$ {\em motivates} the agent if
in $G'$ with reward $r$, the agent reaches the goal node $t$.
(We will also refer to $G'$ as a {\em motivating subgraph}.)
Note that it is possible for the full graph $G$ to be a motivating
subgraph (of itself).

It would be natural to conjecture that if there is any subgraph $G'$
of $G$ that motivates the agent, then 
there is a motivating subgraph consisting simply of an $s$-$t$ path $P$.
Indeed, in any motivating subgraph $G'$,
the actual sequence of nodes and edges
the agent traverses does form an $s$-$t$ path $P$, 
and so one might suspect that
this path $P$ on its own should also be motivating.

In fact this is not the case, however.
Figure \ref{fig:minimal-ex01} shows a graph illustrating a phenomenon
that we find somewhat surprising {\em a priori}, though not hard to verify
from the example.
In the graph $G$ depicted in the figure, an agent with $\beta = 1/2$ will reach
the goal $t$.  However, there is no proper subgraph of $G$ in which 
the agent will reach the goal.
The point is that the agent starts out expecting to follow the
path $s$-$a$-$b$-$t$, but when it gets to node $a$ it finds the
remainder of the path $a$-$b$-$t$
too expensive to justify the reward, and it switches
to $a$-$c$-$t$ for remainder.
With just the path $s$-$a$-$b$-$t$ in isolation, 
the agent would get stuck at $a$; and 
with just $s$-$a$-$c$-$t$, the agent would never start out from $s$.
It is crucial that the agent mistakenly believe the upper path is an
option in order to eventually use the lower path to reach the goal.

It is interesting, of course, to consider real-life analogues of
this phenomenon.  In some settings, the structure in 
Figure \ref{fig:minimal-ex01} could correspond to deceptive practices
on the part of the designer of $G$ --- in other words, inducing the agent
to reach the goal by misleading them at the outset.
But there are other settings in real life where one could
argue that the type of deception represented here is 
more subtle, not any one party's responsibility, and potentially even salutary.
For example,
suppose the graph schematically represents the learning of a skill such as
a musical instrument. There's the initial commitment corresponding to
the edge $(s,a)$, and then the fork at $a$ where one needs to decide
whether to ``get serious about it'' (taking the expensive edge $(a,b)$)
or not (taking the cheaper edge $(a,c)$).
In this case, the agent's trajectory could describe the story of someone
who derived personal value from learning the violin (the lower path) even 
though at the outset they believed incorrectly that they'd be willing to put
the work into becoming a concert violinist (the upper path, with the
edge $(a,b)$ that proved too costly once the agent was standing at $a$).

\xhdr{The Structure of Minimal Motivating Subgraphs}
Given that there is sometimes no single path in $G$ that is motivating,
how rich a subgraph do we necessarily need to motivate the agent?
Let us say that a subgraph $G^*$ of $G$ is a {\em minimal motivating subgraph}
if (i) $G^*$ is motivating, and (ii) no proper subgraph of $G^*$ is motivating.
Thus, for example, in Figure \ref{fig:minimal-ex01},
the graph $G$ is a minimal motivating subgraph of itself; no proper
subgraph of $G$ is motivating.

Concretely, then, we can ask the following question:
what can a minimal motivating subgraph look like?
For example, could it be arbitrarily dense with edges?

In fact, minimal motivating subgraphs necessarily have a sparse
structure, which we now describe in our next theorem.
To set up this result, we need the following definition.
Given a directed acyclic graph $G$ and a path $P$ in $G$,
we say that a path $Q$ in $G$ is a {\em $P$-bypass} if the
first and last nodes of $Q$ lie on $P$, and no other nodes of $Q$ do;
in other words, $P \cap Q$ is equal to the two ends of $Q$.

We now have

\begin{theorem}
If $G^*$ is a minimal motivating subgraph (from start node $s$
to goal node $t$), then it contains an $s$-$t$ path $P^*$ with
the properties that
\begin{itemize}
\item[(i)] Every edge of $G^*$ is either part of $P^*$ or lies on
a $P^*$-bypass in $G^*$; and
\item[(ii)] Every node of $G^*$ has at most one outgoing edge that
does not lie on $P^*$ (in other words, $G^* - E(P^*)$ has maximum
out-degree $1$).
\end{itemize}
\label{thm:minimal}
\end{theorem}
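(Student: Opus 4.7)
The candidate for $P^*$ that I would take is the path actually traversed by the agent in $G^*$; since $G^*$ is motivating, this path ends at $t$ and is thus an $s$-$t$ path. I would establish the two properties separately.

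For (i), the first step is to observe that in any minimal motivating subgraph, every node must lie on some $s$-$t$ path. If a node $v$ is either unreachable from $s$ or cannot reach $t$, then acyclicity of $G^*$ ensures that $v$ does not lie on any shortest $y$-$t$ path with $y$ reachable from $s$; consequently, deleting $v$ together with its incident edges leaves every distance $d(y,t)$ consulted by the agent on $P^*$ unchanged, so the agent still traverses $P^*$, contradicting minimality. Hence every edge of $G^*$ also lies on some $s$-$t$ path $Q$. Given any edge $e$ and such a $Q$ containing it, taking the subpath of $Q$ between the last $P^*$-node at or before $e$ and the first $P^*$-node at or after $e$ yields either $e$ itself (when $e \in E(P^*)$) or a $P^*$-bypass containing $e$.

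For (ii), I would argue by contradiction: suppose some node $u$ has two outgoing edges $e_1 = (u, x_1)$ and $e_2 = (u, x_2)$, neither on $P^*$. Label them so that $c(u, x_1) + d(x_1, t) \geq c(u, x_2) + d(x_2, t)$, meaning $e_1$ is at least as inefficient as $e_2$ as a first step from $u$ toward $t$. The plan is to show that deleting $e_1$ preserves every shortest-path distance $d(y, t)$ in $G^*$: any shortest $y$-$t$ path using $e_1$ has the form $y \to \cdots \to u \to x_1 \to \cdots \to t$, and replacing the portion from $u$ onward by the shortest $u$-$t$ path that begins with $e_2$ produces a path of cost at most the original (by the labeling). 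Crucially, this replacement path avoids $e_1$ because in the DAG $G^*$, $x_2$ cannot reach back to $u$. With distances preserved, the agent's evaluations at every node of $P^*$ are identical in $G^* - e_1$, so the agent still traverses $P^*$ to $t$, meaning $G^* - e_1$ is motivating---contradicting minimality.

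The main subtlety, and the step I would verify most carefully, is the distance-preservation argument in (ii): acyclicity must be invoked to rule out rerouted paths re-using $e_1$, and one should check that consistent tie-breaking causes no drift off $P^*$ when an option is removed. Beyond that, the argument is a clean chain of ``if this edge or node were not essential, minimality would be violated'' reductions.
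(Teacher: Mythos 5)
Your proof is correct and follows essentially the same route as the paper's: $P^*$ is the path the agent actually traverses, (i) is proved by showing that an edge not lying on a $P^*$-bypass could be deleted without affecting any of the agent's shortest-path evaluations (contradicting minimality), and (ii) is proved by deleting the less efficient of two off-$P^*$ out-edges of a node and checking that every distance $d(\cdot,t)$ the agent consults---and hence, with consistent tie-breaking, every decision along $P^*$---is unchanged. The only cosmetic difference is that the paper phrases (ii) as ``the agent's planned path never contains the deleted edge'' rather than your explicit rerouting/distance-preservation argument, but both rest on the same comparison $c(v,w)+d(w,t)\le c(v,w')+d(w',t)$.
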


\begin{proof}
If $G^*$ is a motivating subgraph, then the agent in $G^*$ will
reach $t$.  Consider the $s$-$t$ path the agent follows;
we use this as the path $P^*$ in the theorem statement.

We first establish (i).  Consider any edge $(v,w)$ of $G^*$ that
is not on $P^*$.
If there is a path $Q_v$ from $P^*$ to $v$, and a path $Q_w$
from $w$ to $P^*$, then $(v,w)$ must lie on a $P^*$-bypass, by
simply concatenating 
the suffix of $Q_v$ from its last meeting with $P^*$,
then the edge $(v,w)$, and then the prefix of $Q_w$ up to its first
meeting with $P^*$.

So suppose by way of contradiction that
$G^*$ contains an edge $(v,w)$ that does not belong to a $P^*$-bypass.
Then it must be that there is either no path from $P^*$ to $v$,
or no path from $w$ to $P^*$.  In either case, for any node $u$ on $P^*$,
there is no $u$-$t$ path that contains $(v,w)$.  Thus
the shortest-path evaluation of an agent standing at $u$ will not
be affected if $(v,w)$ is deleted.  Since the agent follows $P^*$
from $s$ to $t$, it must be that the agent would also do that
in $G^* - \{(v,w)\}$, and this contradicts the minimality of $G^*$.

Now we consider property (ii).  For any node $w$, we use 
$d^*(w,t)$ to denote the cost of a shortest $w$-$t$ path in $G^*$
(evaluated according to the true costs without present bias).
Also, we fix a topological ordering of $G^*$.
Suppose by way of contradiction that
$(v,w)$ and $(v,w')$ are both edges of $G^*$, where neither $w$ nor
$w'$ belongs to $P^*$.
Let $Q_w^*$ and $Q_{w'}^*$ denote shortest $w$-$t$ and $w'$-$t$ paths
respectively.
Suppose (swapping the names of $w$ and $w'$ if necessary) that
$c(v,w) + d^*(w,t) \leq c(v,w') + d^*(w',t)$.
Moreover, in the event the two sides of this inequality are equal,
we assume the agent uses a consistent tie-breaking rule, so
that if it is ever indifferent between using the edge $(v,w)$ or $(v,w')$,
it chooses $(v,w)$.

We claim that $G^* - \{v,w'\}$ is a motivating subgraph,
which will contradict the minimality of $G^*$.
To see why, consider any node $p \in P^*$ that precedes $v$ in 
the chosen topological ordering and from which the
agent's planned path contains $v$.
We consider two cases: if $p \neq v$, or if $p = v$.
If $p \neq v$, then the planned path cannot contain the edge $(v,w')$, since
$(v,w)$ followed by $Q_w^*$ is no more expensive than $(v,w')$
followed by $Q_{w'}^*$, and if they are equal then we have
established that the agent breaks ties in favor of $(v,w)$ over $(v,w')$.
If $p = v$, then the planned path also cannot contain the edge $(v,w')$,
since the planned path's next edge lies on $P^*$, while $(v,w')$ is not on $P^*$.

Thus there is no node from which the agent's planned path 
contains the edge $(v,w')$.
Finally, we argue the agent will make the same sequence of decisions
in $G^* - \{(v,w')\}$ and $G^*$. 
Indeed, suppose there were a node where the agent made a different decision,
and let $p \in P^*$ be the first such node.
In $G^*$, the agent's decision at $p$ is to follow the edge $(p,q)$ on $P^*$,
as part of a planned path $R_p$.
In $G^* - \{v,w'\}$ the agent won't decide to quit: the path
$R_p$ is still available, since it does not contain $(v,w')$.
And in $G^* - \{v,w'\}$, the agent can't now prefer a path $R_p'$ to $R_p$,
since both $R_p$ and $R_p'$ were available in $G^*$ as well,
and the agent preferred $R_p$.
Thus the agent makes the same sequence of decisions
in $G^* - \{(v,w')\}$ and $G^*$, 
and so $G^* - \{v,w'\}$ is a motivating subgraph.
\end{proof}

\section{Further Directions: Designing for Heterogeneous Agents}
% \section{Further Directions: Partitioning Tasks}
\label{sec:design-heterogeneous}

In the previous section we considered a set of questions that have
a design flavor --- how do we structure a graph to motivate an
agent to reach the goal?
A further general direction along these lines is to consider 
analogous questions for a collection of heterogeneous agents
with different levels of present bias $\beta$.

In particular, suppose we have a population of agents, each with
its own value of $\beta$, and we would like to design a structure
in which as many of them as possible reach the goal.
Or, adding a further objective, we may want many of them to reach the goal
while minimizing the amount of (wasted) work done by agents who 
make partial progress but then fail to reach the goal.

This is a broad question that we pose primarily as a direction 
for further work.  
In particular, it is an interesting open question to explore
the case of motivating subgraphs in the style of the previous
section when there is not just one agent but
a population of agents with heterogeneous 
values of $\beta$.
To illustrate some of the considerations that arise,
we give a tight analysis of a problem with heterogeneous agents
in a model 
that is structurally much simpler than our graph-theoretic formulation.

\xhdr{Partitioning a Task for a Heterogeneous Population}
The design question we consider is to divide up a single task into $k$
pieces so as to make it easier to motivate an agent to perform it.
This is different from the setting of our graph model, in which tasks
were indivisible and hence the splitting of a task wasn't part of
the set of design options.
As we will see, even the problem of dividing a single task already
contains some subtlety, and it lets us think about the relationships
between agents with different levels of present bias.

We model our population of agents as a continuum, 
with $\beta$ uniformly distributed over $[0,1]$.
Thus, rather than talking about the number of agents from a finite set
who complete the task, we will think about the fraction that complete it.

Thus, suppose we have a task of cost $c$ and reward $r$, and it will be
performed by a continuum of agents with $\beta$ uniformly distributed 
over $[0,1]$.  We are allowed to partition the task into $k$ steps,
of costs $c_1, \ldots, c_k$, so that $c = \sum_{i=1}^k c_i$.
Any partition has a {\em completion rate}, defined to be 
the fraction of agents who complete all the steps of the task.
Our goal is to find a partition with maximum completion rate.

We define one additional piece of terminology: 
for any step in the decomposition of the
task, the {\em bottleneck} value of the step
is the minimum $\beta$ for which an agent of present bias $\beta$
will complete that step (when they evaluate
the future steps and the reward as well using their parameter $\beta$).

Let us start by considering some small values of $k$, and then
consider the general case.

\xhdr{${\bf k = 1}$}
When $k = 1$, there is no choice in how to partition, since we
just have a single step of cost $c$.
In this case, the agents that perform the task are those for whom
$c \leq \beta r$, and hence $\beta \geq c/r$.  In other words, 
the bottleneck value of $\beta$ is $c/r$.
We write $\gamma$ for this quantity $c/r$;
in terms of $\gamma$, the completion rate is thus $1 - \gamma$.

\xhdr{${\bf k = 2}$}
When $k = 2$, we must divide the task into two steps so that the
cost of the first step is some $x \in [0,c]$, and the cost of the second
step is $c-x$.  In this case, the bottleneck of the second step is
$(c-x)/r$, and the bottleneck of the first step is the value $\beta$
such that $x + \beta(c - x) = \beta r$, which implies
$\beta = \dfrac{x}{x + r - c}.$
We write $\delta = r - c$, and so we have the bottleneck 
$\beta = \dfrac{x}{x + \delta}.$

Now, the first bottleneck is monotone increasing in $x$, and
the second bottleneck is monotone decreasing in $x$.
We need to choose $x$ to minimize the larger of the two bottlenecks,
which occurs when they are equal.  Hence
\begin{eqnarray*}
\frac{c - x}{r} & = & \frac{x}{x + \delta} 
% \\ 1 - \frac{c - x}{r} & = & 1 - \frac{x}{x + \delta} \\
% \frac{r - c + x}{r} & = & 1 - \frac{x}{x + \delta}  \\
% \frac{x + \delta}{r} & = & \frac{\delta}{x + \delta}  \\
% (x + \delta)^2 & = & \delta r \\
% x & = & \sqrt{\delta r} - \delta 
\end{eqnarray*}
% $\dfrac{c - x}{r} = \dfrac{x}{x + \delta}$,
from which we can solve for $x$ to get $x = \sqrt{\delta r} - \delta$.
With this choice of $x$, the bottleneck of the two steps is the same; 
it is 
$$\frac{x}{x + \delta}
  = 1 - \frac{\delta}{x + \delta} 
  = 1 - \frac{\delta}{\sqrt{\delta r}}
  = 1 - \sqrt{\frac{\delta}{r}}
  = 1 - \sqrt{\frac{r - c}{r}}
  = 1 - \sqrt{1 - \gamma},$$
and so the completion rate is $\sqrt{1 - \gamma}$.

\xhdr{Arbitrary values of $k$}
For a fixed value of $r$, let $f_k(c)$ denote the maximum 
completion rate of a task of cost $c$ and reward $r$, when
we can partition it optimally into $k$ steps.
Let $x_k^*(c)$ denote the cost of the first step in 
the optimal solution.

We claim by induction that 
$$f_k(c) = (1 - \gamma)^{1/k}$$ 
and 
$$x_k^*(c) = \delta^{(k - 1)/k} r^{1/k} - \delta,$$
and moreover that in the optimal solution the bottlenecks of all steps 
are the same, and equal to $1 - f_k(c)$.
Note that we have already established the case $k = 2$ above.

For $k > 2$, suppose we give the first step a cost of $x$.
Then the bottleneck for the first step is, as before, $x / (x + \delta)$.
The induction hypothesis tells us that the bottleneck for 
each of the remaining steps is 
$$1 - f_{k-1}(c-x) = 1 - \left(1 - \frac{c - x}{r}\right)^{1/(k-1)}
 = 1 - \left(\frac{x + \delta}{r}\right)^{1/(k-1)}.$$
As in the case $k = 2$, the bottleneck in the first step is
monotone increasing in $x$, and the shared bottleneck in the 
remaining steps is monotone decreasing in $x$.
Thus to minimize the largest bottleneck, we set the first 
bottleneck equal to the shared bottleneck in the remaining steps, obtaining
\begin{eqnarray*}
\frac{x}{x + \delta} & = & 1 - \left(\frac{x + \delta}{r}\right)^{1/(k-1)} 
% \\ \frac{\delta}{x + \delta} & = & \left(\frac{x + \delta}{r}\right)^{1/(k-1)} \\
% \left(\frac{\delta}{x + \delta}\right)^{k-1} & = & \frac{x + \delta}{r} \\
% (x + \delta)^k & = & \delta^{k-1} r \\
% x & = & \delta^{(k-1)/k} r^{1/k} - \delta.
\end{eqnarray*}
and hence by solving for $x$, we get
$x = \delta^{(k-1)/k} r^{1/k} - \delta$.
With this value of $x$, we get a shared bottleneck of 
$$\frac{x}{x + \delta}
  = 1 - \frac{\delta}{x + \delta} 
  = 1 - \frac{\delta}{\delta^{(k-1)/k} r^{1/k}}
  = 1 - \left(\frac{\delta}{r}\right)^{1/k}
  = 1 - \left(\frac{r - c}{r}\right)^{1/k} 
  = 1 - (1 - \gamma)^{1/k},$$ 
which completes the induction step.

Note also that there is no wasted effort in this solution:
since all the bottlenecks are the same, 
any agent who performs the first task will perform the remaining
$k - 1$ tasks as well.
Thus no agents get partway through the sequence of tasks and then quit.

\section{Conclusion and Open Questions}

We have developed a graph-theoretic model in which an agent
constructs a path from a start node $s$ to a goal node $t$
in an underlying graph $G$ representing a sequence of tasks.
Time-inconsistent agents may plan an $s$-$t$ path that is
different from the one they actually 
follow, and this type of behavior in the model
can reproduce a range of qualitative phenomena including
procrastination, abandonment of long-range tasks, and 
the benefits of a reduced set of options.
Our results provide characterizations for a set of
basic structures in this model, including for graphs achieving the 
highest cost ratios between time-inconsistent agents and
shortest paths, and we have investigated the structure of
minimal graphs on which an agent is motivated to reach the goal node.

There are many specific open questions suggested by this work,
as well as more general directions for future research.
We begin by mentioning three concrete questions.
\begin{enumerate}
% \item Our graph-minor characterization in terms of the $k$-fan $\F_k$
% is useful for instances with very large cost ratios.  
% But this same graph minor structure may be useful for constant cost ratios
% as well. In particular, is there a function $f(\cdot)$ such that for
% all $\phi > 1$, if the cost ratio of the instance $G$ is at least $\phi$,
% then the undirected skeleton $\sigma(G)$ must contain an $\F_k$-minor
% for $k = f(\phi)$?
\item Our graph-minor characterization in terms of the $k$-fan $\F_k$
is useful for instances with very large cost ratios.  
But this same graph minor structure may be useful for constant cost ratios
as well. In particular, is there a function $f(\cdot)$ such that for
all $k \geq 1$, any instance $G$ in which the undirected skeleton $\sigma(G)$ 
contains no $\F_k$ minor must have cost ratio at most $f(k)$?
\item We have investigated the structure of minimal motivating subgraphs.
But how hard is it computationally to find motivating subgraphs?
Is there a polynomial-time algorithm that takes 
an instance $G$ (including a reward $r$) and determines whether 
$G$ contains a motivating subgraph with reward $r$?
\item The definition of a motivating subgraph is based on a scenario
in which a designer can delete nodes and/or edges from the task graph
to make the goal $t$ easier to reach.  An alternate way to motivate
an agent to reach $t$ is to place intermediate rewards on specific
nodes or edges; the agent will claim each reward if it reaches the
node or edge on which it is placed.  Now the question is to place
rewards on the nodes or edges of an instance $G$ such that the agent
reaches the goal $t$ while claiming as little total reward as possible;
this corresponds to the designer's objective to pay out as little
as possible while still motivating the agent to reach the goal.

There are a couple of points to note in this type of question about
intermediate rewards.  First, one should think about the implication
for creating ``exploitive'' solutions in which the agent is motivated
by intermediate rewards that it will never claim, because these rewards
are on nodes that the agent will never reach.  One could imagine a version
of the problem, for example, in which rewards may only be placed on 
nodes or edges that the agent will actually traverse in the resulting solution.
A second issue is the option of using negative rewards as well as 
positive ones; a negative reward could correspond to a step in which
the agent has to ``pay into the system'' with the intention of 
receiving counterbalancing positive rewards at later points 
in the traversal of $G$.
Note that there is a connection between negative rewards and questions
involving motivating subgraphs:
by placing a sufficiently large negative reward on a particular node
or edge, it is possible to ensure that the agent won't traverse it,
and hence we can use negative intermediate rewards to
implement the deletion of nodes and edges.
\end{enumerate}

Beyond these specific questions, there is also a wide range of broader
issues for further work.
% we believe the analysis of this model raises a very wide range
% of broader issues for further work.
These include finding structural properties beyond our graph-minor
characterization that have a bearing on the cost ratio of a given instance;
obtaining a deeper understanding of the relationship between agents
with different levels of time-inconsistency as measured by different
values of $\beta$;
and developing algorithms for designing graph structures that
motivate effort as efficiently as possible, including for the
case of multiple agents with diverse time-inconsistency properties.

%%%%%

\xhdr{Acknowledgments}
We thank Supreet Kaur, Sendhil Mullainathan, and Ted O'Donoghue
for valuable discussions and suggestions.

\bibliographystyle{plain}
\bibliography{n}

% \newpage
% \begin{appendix}
% \input{eq}
% \end{appendix}

\end{document}